\newif\ifFull 
\newcommand {\opt } {\mathrm{OPT}}
\def\argmin{\mathop{\rm argmin}}
\newtheorem{definition}{Definition}
\newtheorem{lemma}{Lemma}
\newtheorem{theorem}[lemma]{Theorem}
\newtheorem{observation}{Observation}
\newtheorem{claim}[lemma]{Claim}
\newenvironment{remark}[1][Remark]{\begin{trivlist}
\item[\hskip \labelsep {\bfseries #1}]}{\end{trivlist}}
\begin{document}

\title{
{Linear-time approximation schemes for planar minimum three-edge connected and three-vertex connected spanning subgraphs }}
\author{Baigong Zheng }
\affil{Oregon State University \\
zhengb@eecs.oregonstate.edu
}
\date{}
\maketitle
\thispagestyle{empty}
\begin{abstract}
We present the first polynomial-time approximation schemes, i.e., $(1 + \epsilon)$-approximation algorithm for any constant $\epsilon > 0$, for the minimum three-edge connected spanning subgraph problem and the minimum three-vertex connected spanning subgraph problem in undirected planar graphs.
Both the approximation schemes run in linear time.

\end{abstract}
\vfill
{\small 
This material is based upon work supported by the National Science Foundation under Grant No.\ CCF-1252833.}

\newpage
\setcounter{page}{1}

\section{Introduction}

Given an undirected unweighted graph $G$, the {\em minimum $k$-edge connected spanning subgraph problem} ($k$-ECSS) asks for a spanning subgraph of $G$ that is $k$-edge connected (remains connected after removing any $k-1$ edges) and has a minimum number of edges. 
The {\em minimum $k$-vertex connected spanning subgraph problem} ($k$-VCSS) asks for a $k$-vertex connected (remains connected after removing any $k-1$ vertices) spanning subgraph of $G$ with minimum number of edges.
These are fundamental problems in network design and have been well studied.
When $k=1$, the solution is simply a spanning tree for both problems.
For $k \ge 2$, the two problems both become NP-hard~\cite{GJ79,CT00},
so people put much effort into achieving polynomial-time approximation algorithms.
Cheriyan and Thurimella~\cite{CT00} give algorithms with approximation ratios of $1+1/k$ for $k$-VCSS and $1+2/(k+1)$ for $k$-ECSS for simple graphs.
Gabow and Gallagher~\cite{GG12} improve the approximation ratio for $k$-ECSS to $1 + 1/(2k) + O(1/k^2)$ for simple graphs when $k\ge 7$, and they give a $(1 +21/(11k))$-approximation algorithm for $k$-ECSS in multigraphs.
Some researchers have studied these two problems for the small connectivities $k$, especially $k=2$ and $k=3$, and obtained better approximations.  
The best approximation ratio for $2$-ECSS in general graphs is $5/4$ of Jothi, Raghavachari, and Varadarajan~\cite{JRV03}, while for $2$-VCSS in general graphs, the best ratio is $9/7$ of Gubbala and Raghavachari~\cite{GR05}.
Gubbala and Raghavachari~\cite{GR07} also give a $4/3$-approximation algorithm for 3-ECSS in general graphs.

A {\em polynomial-time approximation scheme} (PTAS) is an algorithm that, given an instance and a positive number $\epsilon$, finds a $(1+\epsilon)$-approximation for the problem and runs in polynomial time for fixed $\epsilon$.
Neither $k$-ECSS nor $k$-VCSS have a PTAS even in graphs of bounded degree for $k =2$ unless P = NP~\cite{CL99}.
However, this hardness of approximation does not hold for special classes of graphs and small values of $k$.
For example, Czumaj et al.~\cite{CGSZ04} show that there are PTASes for both of 2-ECSS and 2-VCSS in planar graphs. Both problems are NP-hard in planar graphs (by a reduction from Hamiltonian cycle).
Later, Berger and Grigni improved the PTAS for $2$-ECSS to run in linear time~\cite{BG07}.


Following their PTASes for 2-ECSS and 2-VCSS, Czumaj et al.~\cite{CGSZ04} ask the following: can we extend the PTAS for $2$-ECSS to a PTAS for 3-ECSS in planar graphs?
A PTAS for 3-VCSS in planar graphs is additionally listed as an open problem in the {\em Handbook of Approximation Algorithms and Metaheuristics} (Section 51.8.1)~\cite{Gonzalez2007}.
In this paper we answer these questions affirmatively by giving the first PTASes for both 3-ECSS and 3-VCSS in planar graphs.
Our main results are the following theorems.
\begin{theorem}\label{thm: EC}
For 3-ECSS, there is an algorithm that, for any $\epsilon > 0$ and any undirected planar graph $G$, finds a $(1 + \epsilon)$-approximate solution in linear time.
\end{theorem}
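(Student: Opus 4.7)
The plan is to prove Theorem~\ref{thm: EC} along the lines of the Baker/Klein planar PTAS framework, adapted to 3-edge connectivity. The essential observation is that every 3-edge-connected spanning subgraph has minimum degree $3$ and therefore at least $\lceil 3n/2\rceil$ edges, so $\opt(G)=\Omega(n)$. This lets us absorb an additive error of $\epsilon n$ into a multiplicative $(1+O(\epsilon))$ factor, which is exactly the feature that makes Baker-style schemes applicable.

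I would first compute a constant-factor approximate 3-ECSS $H$ of $G$ in linear time (for example via the $3/2$-approximation of Cheriyan--Thurimella~\cite{CT00} or the $4/3$-approximation of Gubbala--Raghavachari~\cite{GR07}). Since $|E(H)|=O(\opt)=O(n)$, any further work done on top of $H$ can be charged against $\opt$. Next, perform a breadth-first-search layering of $G$ and, for a parameter $k=\Theta(1/\epsilon)$, define slabs of $k$ consecutive BFS layers. A standard averaging argument over the $k$ possible shifts produces a shift for which the edges of some fixed near-optimal solution that cross slab boundaries contribute at most an $\epsilon$-fraction of $\opt$. After fixing such a shift, each slab is (up to adding one apex vertex representing the ``outside'') $O(k)$-outerplanar and hence has branchwidth $O(k)=O(1/\epsilon)$.

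Within each slab I would then solve a local version of 3-ECSS optimally by dynamic programming over a branch decomposition. The DP state at each decomposition node encodes the interaction of the current partial subgraph with the boundary vertices: it records the partition of the boundary into currently connected components and, for every pair of boundary vertices, the maximum number (capped at $3$) of edge-disjoint paths already supplied by the partial solution. This is a richer version of the state used by Berger--Grigni~\cite{BG07} for 2-ECSS, but still has only $2^{O((1/\epsilon)\log(1/\epsilon))}$ reachable values per node. Gluing the per-slab optima along the shared boundary vertices, together with a small number of extra edges from $H$ added where needed to restore 3-edge connectivity across slabs, produces the final subgraph.

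The main obstacle I expect is the gluing step: ensuring that the concatenation of locally optimal 3-edge-connected subgraphs is globally 3-edge-connected without blowing up the cost. For 2-ECSS this is comparatively benign because a single retained edge can close a cycle across a 2-separator, but for 3-ECSS the slab interfaces involve $O(1/\epsilon)$ boundary vertices and 2-edge cuts can run between slabs in subtle patterns. Handling this needs (i) a careful choice of DP state so that each slab's boundary exposes enough edge-disjoint path structure to be composable, and (ii) a structural argument showing that the edges needed for cross-slab 3-edge connectivity can be charged to the $\epsilon$-fraction of $\opt$ whose edges cross slab boundaries. A secondary concern is keeping the table-combination work per branch node constant in $n$ so that, combined with the linear number of nodes, the overall running time remains linear in $n$ for every fixed $\epsilon$.
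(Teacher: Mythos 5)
Your outline follows the same general framework as the paper (the lower bound $2|\opt(G)|\ge 3|V(G)|$, a Baker-style layering with a shift chosen by averaging, bounded-branchwidth dynamic programming whose states cap edge-disjoint path counts at $3$), but it has a genuine gap, and it is precisely the part you yourself flag as ``the main obstacle.'' First, you never make the per-slab subproblem well posed: the subgraph of $G$ induced by $O(1/\epsilon)$ consecutive layers is in general \emph{not} 3-edge connected, and two vertices of a slab may require a nontrivial subgraph outside the slab in every witness of their connectivity (this is exactly Figure~\ref{fig: previous}(a) in the paper), so ``solve a local version of 3-ECSS optimally'' is undefined until you say which boundary conditions the outside will actually honor; a DP state recording boundary components and pairwise path counts does not by itself resolve this. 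The paper's fix is to contract every component of $G$ outside the slice into a single node, so that each \emph{3EC slice} is itself 3-edge connected (contraction preserves edge-connectivity, Lemma~\ref{lem: contraction} and Lemma~\ref{lem: EC1}) while still having branchwidth $O(k)$ (Lemma~\ref{lem: bw}); then ordinary minimum-weight 3-ECSS is solved on each slice. Second, your gluing step --- ``a small number of extra edges from $H$ added where needed,'' charged to the $\epsilon$-fraction of $\opt$ crossing slab boundaries --- is asserted, not argued, and the analogue of the 2-ECSS gluing does not carry over: a single shared boundary cycle is provably insufficient for 3-connectivity. The paper instead shares a whole \emph{double layer} of edges between adjacent slices (all edges incident to the boundary level plus all edges of the next level), gives these edges weight $0$ in the slice DP so that the per-slice optimum can be bounded by $|\opt(G)\cap(H_i\setminus R)|+|D_{f(i)}|+|D_{f(i+1)}|$ (Lemma~\ref{lem: EC3}, which itself needs the nontrivial fact that $\opt(G)\cap H_i^a$ is feasible for the contracted slice), and proves feasibility of the union by an induction over a tree of slices in which each contracted node is uncontracted and its internal connectivity is supplied by the child slices' solutions together with the double-layer edges (Lemma~\ref{lem: EC2}). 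None of this structure --- the contraction that makes slices 3-edge connected, the double-layer sharing, the zero-weight accounting, or the inductive feasibility proof --- appears in your proposal, even in sketch form, so the approximation guarantee and the feasibility of the output are not established.

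A secondary issue: the preprocessing step that computes a constant-factor 3-ECSS approximation is both unnecessary and harmful to your running-time claim. Planarity already gives $|E(G)|\le 6|\opt(G)|$ after discarding superfluous parallel edges, so $G$ itself serves as the spanner; and the Cheriyan--Thurimella and Gubbala--Raghavachari algorithms you invoke are not linear-time, so relying on them would forfeit the claimed $O(n)$ bound for fixed $\epsilon$.
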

\begin{theorem}\label{thm: VC}
For 3-VCSS, there is an algorithm that, for any $\epsilon > 0$ and any undirected planar graph $G$, finds a $(1 + \epsilon)$-approximate solution in linear time.
\end{theorem}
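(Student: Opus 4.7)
\medskip

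\noindent\textbf{Proof plan.} The plan is to adapt the Baker/Klein framework for planar PTASes to vertex connectivity, paralleling the 3-ECSS scheme of Theorem~\ref{thm: EC}. First, compute in linear time a 3-vertex-connected spanning subgraph $H$ with $|E(H)| \le c \cdot \opt$ for some constant $c$, using for instance the $(1+1/3)$-approximation of Cheriyan and Thurimella. Since every vertex of any 3-VCSS has degree $\ge 3$, $\opt \ge 3n/2$, so $|E(H)| = O(n)$. Root a breadth-first search in $H$, partition $V(G)$ into layers $L_0, L_1, \dots$ by BFS distance, and fix $k = \Theta(1/\epsilon)$. By an averaging (rotation) argument, there is a shift $s \in \{0,\dots,k-1\}$ such that the edges of $H$ incident to the layers $L_s, L_{s+k}, L_{s+2k}, \dots$ account for at most an $O(1/k) = O(\epsilon)$ fraction of $|E(H)| = O(\opt)$, i.e.\ $O(\epsilon \cdot \opt)$ edges.

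With this layering fixed, $G$ decomposes into slabs of radius $O(k)$, each of branchwidth $O(k)$. On each slab, compute via a dynamic program over a branch decomposition the cheapest augmentation such that, together with \emph{all} edges incident to the chosen boundary layers, the union is 3-vertex-connected and spanning. The DP state at each separator of size $O(k)$ records the ``vertex-connectivity pattern'' of the boundary vertices---in particular, for every pair of boundary vertices and every choice of at most two other boundary vertices $T$ to remove, whether the pair remains connected in the slab-subgraph minus $T$, together with through which boundary vertices vertex-disjoint paths are routed. Since the boundary has $O(k)$ vertices, the number of distinct patterns is bounded by a function of $\epsilon$ alone, so the DP runs in time $f(\epsilon) \cdot n_i$ per slab and $O(n)$ overall.

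Concatenating the slab-optimal augmentations with the boundary-incident edges produces a feasible 3-VCSS whose cost is at most the sum over slabs of $\opt$ restricted to each slab, plus the boundary-incident edges from $H$, for a total of $(1 + O(\epsilon)) \opt$; rescaling $\epsilon$ yields the claimed approximation. The main obstacle is the design of the bounded-branchwidth DP. Unlike edge connectivity, vertex connectivity does not behave cleanly under contraction along a separator: a size-$1$ or size-$2$ vertex cut straddling the separator can be invisible to any local merging of boundary vertices, so the state must faithfully encode vertex-disjoint path structures rather than just an edge-cut partition. One clean way to sidestep an explicit DP is Courcelle's theorem: expressing ``this edge set extends to a spanning 3-vertex-connected subgraph of $G$ of weight $\le W$'' in monadic second-order logic gives linear-time solvability on graphs of bounded branchwidth for free, and the optimization variant and subgraph extraction follow by standard self-reduction.
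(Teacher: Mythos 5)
Your high-level skeleton (Baker-style layering, shifting to make boundary edges cheap, bounded-branchwidth subproblems, then concatenation) matches the paper's framework, and your fallback of Courcelle's theorem for the bounded-branchwidth subproblem is acceptable in spirit (the paper instead gives an explicit DP, Theorem~\ref{thm: dp}). But the proposal skips exactly the two steps that the paper identifies as the real difficulty, and as stated they fail. First, your per-slab subproblem --- ``the cheapest augmentation such that, together with all edges incident to the chosen boundary layers, the union is 3-vertex-connected and spanning'' --- may simply be infeasible: triconnectivity of two vertices inside a slab can depend on paths that leave the slab entirely (this is Figure~\ref{fig: previous}(a) in the paper), so the slab subgraph of $G$ plus boundary-incident edges need not admit any triconnected spanning subgraph, and even when it does, $\opt(G)$ restricted to the slab need not be feasible for your subproblem, which breaks the charging ``sum over slabs of $\opt$ restricted to each slab.'' The paper resolves this by building each slice with the \emph{outside} (and the enclosed components) contracted to nodes, and then has to prove the contracted slice is still triconnected --- contraction does not preserve vertex connectivity in general, so this needs Vo's lemma about contracting a component separated off by a simple cycle (Lemmas~\ref{lem: component} and~\ref{lem: VC1}) and a companion identification lemma to show $\opt(G)\cap H_i^a$ is feasible for the slice (Lemma~\ref{lem: VC3}).

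Second, your concluding sentence ``Concatenating the slab-optimal augmentations with the boundary-incident edges produces a feasible 3-VCSS'' is precisely what must be proved, and the paper shows it is false for naive slicing: a slice solution may rely on a contracted node whose corresponding component is not connected in the union (Figure~\ref{fig: previous}(c)). The paper's fix is structural: adjacent slices share an entire \emph{double layer} $D_i = E_{i-1,i}\cup E_i\cup E_{i,i+1}\cup E_{i+1}$ (note the extra edge set $E_{i+1}$ among the next level, which your ``edges incident to the boundary layers'' does not include), and these free edges are what make the boundary of each contracted component connected; feasibility of the union is then established by a careful induction over a tree of slices (Lemma~\ref{lem: VC2}, with the claim that $(S^b\cup D_{f(i+1)})\cap G[X]$ is connected). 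Without an argument of this kind your combining step has no justification. Minor points: Cheriyan--Thurimella is neither linear-time nor needed --- for a simple planar graph one takes $G$ itself as the spanner since $|E(G)|\le 3n\le 2|\opt(G)|$ --- and a constant-factor approximate subgraph is not a valid spanner anyway unless you only use it for the shifting bound and run the subproblems on $G$.
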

In the following, we assume there are no self-loops in the input graph for both of 3-ECSS and 3-VCSS.
For $3$-ECSS, we allow parallel edges in $G$, but at most $3$ parallel edges between any pair of vertices are useful in a minimal solution.
For $3$-VCSS, parallel edges are unnecessary, so we assume the input graph is simple. 
Since three-vertex connectivity (triconnectivity) and three-edge connectivity can be verified in linear time~\cite{Schmidt13, MNS13}, we assume the input graph $G$ contains a feasible solution.
W.l.o.g. we also assume $\epsilon <1$.

\subsection{The approach}
Our PTASes follow the general framework for planar PTASes that grew out of the PTAS for TSP~\cite{Klein08}, and has been applied to obtain PTASes for other problems in planar graphs, including minimum-weight 2-edge-connected subgraph~\cite{BG07}, Steiner tree~\cite{BKK07, BKM09}, Steiner forest~\cite{BHM11} and relaxed minimum-weight subset two-edge connected subgraph~\cite{BK08}.
The framework consists of the following four steps.
\begin{description}
\item [Spanner Step.] Find a subgraph $G'$ that contains a $(1+\epsilon)$-approximation and whose total weight is bounded by a constant times of the weight of an optimal solution. Such a graph is usually called a {\em spanner} since it often approximates the distance between vertices.
\item [Slicing Step.] Find a set of subgraphs, called {\em slices}, in $G'$ such that any two of them are face disjoint and share only a set of edges with small weight and each of the subgraphs has bounded {\em branchwidth}. 
\item [Dynamic-Programming Step.] 
Find the optimal solution in each slice using dynamic programming.
Since the branchwidth of each subgraph is bounded, the dynamic programming runs in polynomial time.
\item [Combining Step.] Combine the solutions of all subgraphs obtained by dynamic programming and some shared edges from Slicing step to give the final approximate solution.
\end{description}

For most applications of the PTAS framework, the challenging step is to illustrate the existence of a spanner subgraph.
However, for 3-ECSS and 3-VCSS, we could simply obtain a spanner from the input graph $G$ by deleting additional parallel edges since by planarity there are at most $O(n)$ edges in $G$ and the size of an optimal solution is at least $n$, where $n = |V(G)|$.
So, different from those previous applications, the real challenge for our problems is to illustrate the slicing step and the combining step.  
For the slicing step, we want to identify a set of slices that have two properties: (1) three-edge-connectivity for 3-ECSS or triconnectivity for 3-VCSS, and (2) bounded branchwidth.  
With these two properties, we can solve 3-ECSS or 3-VCSS on each slice efficiently.
For the combining step, we need to show that we can obtain a nearly optimal solution from the optimal solutions of all slices found in slicing step and the shared edges of slices,
that means the solution should satisfy the connectivity requirement and its size is at most $1+\epsilon$ times of the size of an optimal solution for the original input graph.

To identify slices, we generalize a decomposition used by Baker~\cite{Baker94}.
Before sketching our method, we briefly mention the difficulty in applying previous techniques.
The PTAS for TSP~\cite{Klein08} identifies slices in a spanner $G'$ by doing a breadth-first search in its planar dual to decompose the edge set into some levels, and 
any two adjacent slices could only share all edges of the same level, which form a set of edge-disjoint simple cycles.
This is enough to achieve simple connectivity or biconnectivity between vertices of different slices. 
But our problems need stronger connectivity for which one cycle is not enough. 
For example, we may need a non-trivial subgraph outside of slice $H$ to maintain the triconnectivity between two vertices in slice $H$. 
See Figure~\ref{fig: previous} (a).
\begin{figure}
\centering
\includegraphics[scale = 1]{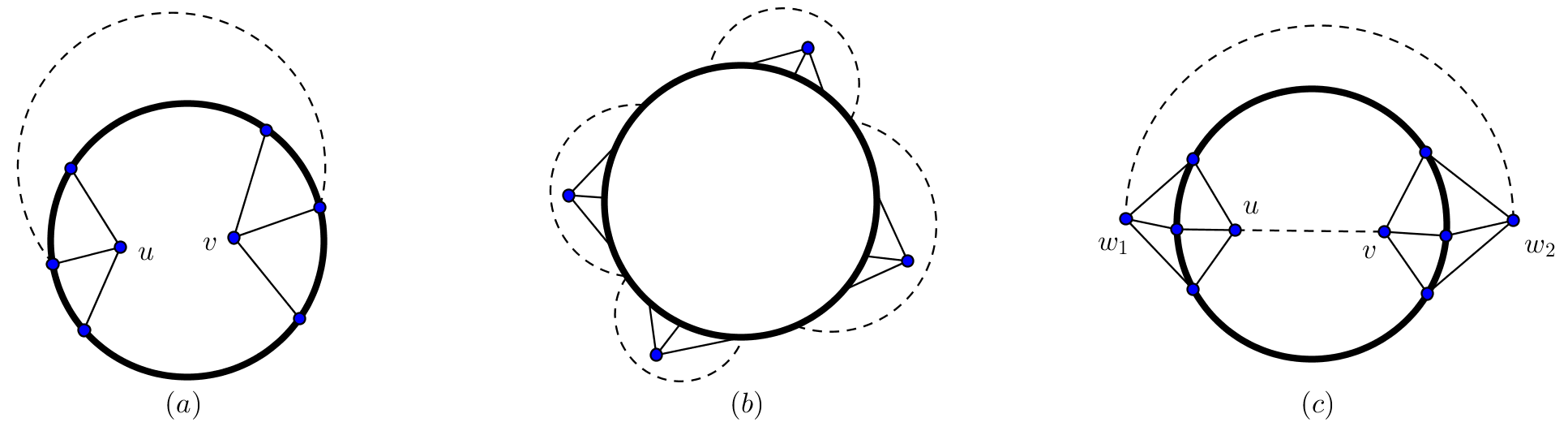}
\caption{(a) The bold cycle encloses a slice $H$. To maintain the triconnectivity between two vertices $u$ and $v$ in $H$, we need the dashed path outside of $H$. 
(b) The bold cycle encloses a 3EC slice $H$. The dashed edges divide the outer face of $H$ into distinct regions, which may contain contracted nodes.
(c) The bold cycle encloses a 3EC slice $H$. The dashed path $P$ between $w_1$ and $w_2$ will be contracted to obtain a node $x$. A solution for 3-ECSS on $H$ may contain $x$ but not the dashed edge between $u$ and $v$. Then the union $S$ of feasible solutions on all 3EC slices is not feasible if it does not contain path $P$. 
}
\label{fig: previous}
\end{figure}

For 3-ECSS, we construct a graph called {\em 3EC slice}.
We contract each component of the spanner that is not in the current 3EC slice.
Since contraction can only increase edge-connectivity, this will give us the 3EC slices that are three-edge connected (Lemma~\ref{lem: EC1} in Section~\ref{sec: EC}). 
However, if we directly apply this contraction method in the slicing step of the PTAS for TSP, the branchwidth of the 3EC slice may not be bounded.
This is because there may be many edges that are not in the slice but have both endpoints in the slice, and such edges may divide the faces of the slice into distinct regions, each of which may contain a contracted node.
See Figure~\ref{fig: previous} (b).
To avoid this problem, we apply the contractions in the decomposition used by Baker~\cite{Baker94}, which define a slice based on the levels of vertices instead of edges.
We can prove that each 3EC slice has bounded branchwidth in this decomposition (Lemma~\ref{lem: bw} in Section~\ref{sec: slice}).

Although each 3EC slice is three-edge connected, the union $S$ of their feasible solutions may not be three-edge connected.
Consider the following situation.
In a solution for a slice, a contracted node $x$ is contained in all vertex-cuts for a pair of vertices $u$ and $v$.
But in $S$, the subgraph induced by the vertex set $X$ corresponding to $x$ may not be connected.
Therefore, $u$ and $v$ may not satisfy the connectivity requirement in $S$.
See Figure~\ref{fig: previous} (c).

In their paper, Czumaj et al.~\cite{CGSZ04} proposed a structure called {\em bicycle}. A bicycle consists of two nested cycles, and all in-between faces visible from one of the two cycles. This can be used to maintain the three-edge-connectivity between those connecting endpoints in cycles.
This motivates our idea: we want to combine this structure with Baker's shifting technique so that two adjacent slices shared a subgraph similar to a bicycle. In this way, we could maintain the strong connectivity between adjacent slices by including all edges in this shared subgraph, whose size could be bounded by the shifting technique.
Specifically, we define a structure called {\em double layer} for each level $i$ based on our decomposition, which intuitively contains all the edges incident to vertices of level $i$ and all edges between vertices of level $i+1$.
Then we define a 3EC slice based on a maximal circuit 
such that any pair of 3EC slices can only share edges in the double layer between them.
In this way, we can obtain a feasible solution for 3-ECSS by combining the optimal solutions for all the slices and all the shared double layers (Lemma~\ref{lem: EC2} in Section~\ref{sec: EC}). 
By applying shifting technique on double layers, we can prove that the total size of the shared double layers is a small fraction of the size of an optimal solution.
So we could add those shared double layers into our solution without increasing its size by much, and this gives us a nearly optimal solution.

For 3-VCSS, we construct a {\em 3VC slice} based on a simple cycle instead of a circuit.
The construction is similar to that of 3EC slices. However, contraction does not maintain vertex connectivity. So we need to prove each 3VC slice is triconnected~(Lemma~\ref{lem: VC1} in Section~\ref{sec: VC}).
Then similar to 3-ECSS, we also need to prove that the union of the optimal solutions of all 3VC slices and all shared double layers form a feasible solution (Lemma~\ref{lem: VC2} in Section~\ref{sec: VC}). 

For dynamic-programming step, we need to solve a minimum-weight 3-ECSS problem in each 3EC slice and a minimum-weight 3-VCSS problem in each 3VC slice. 
This is because we need to carefully assign weights to edges in a slice so that we can bound the size of our solution.
We provide a dynamic program for the minimum-weight 3-ECSS problem in graphs of bounded branchwidth in \ifFull Section~\ref{sec: dp}, \else the full version, \fi which is similar to that in the works of Czumaj and Lingas~\cite{CL98, CL99}.
A dynamic program for minimum-weight 3-VCSS can be obtained in a similar way.
Then we have the following theorem.
\begin{theorem}\label{thm: dp}
Minimum-weight 3-ECSS problem and minimum-weight 3-VCSS problem both can be solved on a graph $G$ of bounded branchwidth in $O(|E(G)|)$ time.
\end{theorem}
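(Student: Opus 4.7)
The plan is to follow the standard paradigm of dynamic programming over a branch decomposition, adapting the template of Czumaj and Lingas~\cite{CL98, CL99}. First I compute a branch decomposition of $G$ of width $w$ (a constant upper bound on the branchwidth) in linear time using, e.g., Bodlaender's algorithm. This yields a ternary tree $T$ whose leaves are in bijection with $E(G)$; every tree edge $e$ carries a middle set $\mathrm{mid}(e) \subseteq V(G)$ of size at most $w$.

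I then process $T$ bottom-up. For each subtree $T_e$ rooted at $e$, corresponding to an edge subset $F_e \subseteq E(G)$, the DP table stores, for every possible \emph{boundary state} on $\mathrm{mid}(e)$, the minimum weight of a subset $H \subseteq F_e$ whose interaction with the boundary is described by that state. For 3-ECSS, a boundary state records (i) the partition of $\mathrm{mid}(e)$ into connected components induced by $H$ and (ii) for each subset $S \subseteq \mathrm{mid}(e)$, the number of edges of $H$ leaving $S$ into the rest of the subtree, capped at $3$ (larger values are irrelevant to 3-edge-connectivity by Menger's theorem). For 3-VCSS, the state is refined to record the full \emph{linkage pattern} of up to three internally vertex-disjoint paths between boundary vertices, since vertex cuts can straddle the boundary. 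Because $|\mathrm{mid}(e)| \le w$ is constant, the number of such states is bounded by a constant $c(w)$, and the merge at an internal tree node can be performed in $O(c(w)^2) = O(1)$ time by iterating over compatible child states and composing their partitions and cut/linkage data; a leaf node handles its single edge analogously. Reading off the optimum at the root yields a minimum-weight 3-edge-connected (resp.\ 3-vertex-connected) spanning subgraph in $O(|E(G)|)$ total time.

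The main obstacle is designing the boundary state so that it is simultaneously of constant size in $w$ and \emph{sufficient}: given two subproblem states, the merge must correctly determine the resulting connectivity state of their union, and the final state at the root must certify global 3-connectivity. For 3-ECSS this is handled by Menger's theorem together with a standard cut-and-paste argument across $\mathrm{mid}(e)$, which makes capped cut values across small subsets a complete invariant; edges in different children compose into edge-disjoint paths without interference. The 3-VCSS case is the more delicate one, because vertex cuts may pass through $\mathrm{mid}(e)$ itself, so the state must record \emph{how} the internal subgraph realizes three vertex-disjoint paths among the boundary vertices, and the merge must verify that paths from the two children compose without sharing internal vertices. Both constructions are direct adaptations of the edge-connectivity DP of~\cite{CL98, CL99}; since all objects involved have size depending only on $w$, the overall running time remains $O(|E(G)|)$, proving both parts of the theorem.
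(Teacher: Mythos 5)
Your overall plan (root a branch decomposition, do bottom-up DP indexed by a bounded ``boundary state'' on each separator, merge child states, read off the optimum at the root) is the same paradigm as the paper's, but the state you propose is not sufficient, and this is exactly the delicate point the paper's proof is built around. Your state for 3-ECSS records only information about $\mathrm{mid}(e)$: the partition of the boundary into components of $H$ and capped counts of $H$-edges leaving subsets $S \subseteq \mathrm{mid}(e)$. Nothing in this state remembers the vertices of $H$ that are \emph{not} on the boundary. But 3-edge-connectivity (resp.\ triconnectivity) is a requirement on \emph{all} pairs of vertices of the final spanning subgraph, and a pair of internal vertices $u,v$ of $H$ may be only 1- or 2-edge-connected inside $H$, with the missing paths to be supplied later through the separator by edges not yet processed. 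Two partial solutions with identical boundary partitions and identical capped cut values over subsets of $\mathrm{mid}(e)$ can differ in whether such internal pairs \emph{can} ever be completed (and by which external patterns), so your merge and your final test at the root cannot distinguish feasible from infeasible unions; the claim that ``capped cut values across small subsets are a complete invariant'' fails precisely for forgotten vertices, and is in fact shaky even for boundary pairs, since minimum cuts between two boundary vertices in $H$ need not correspond to subsets of $\mathrm{mid}(e)$.

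The paper's characteristic is designed to close this gap: it stores, for every vertex $v$ of $H$, a configuration recording how many edge-disjoint paths run from $v$ to each separator vertex and between separator vertices (capped at $3$, and required to be ``connecting''), and, for every pair $u,v$ in $H$, the set of \emph{separator completions} (multisets of at most triple edges on the separator) that would make $u,v$ three-edge-connected; boundedness of the table follows because only the set of distinct configurations matters, not the number of vertices realizing them. If you augment your state with this per-vertex/per-pair information (or an equivalent), and note that merges may use a brute-force disjoint-paths subroutine whose cost depends only on $w$, your argument goes through and gives the $O(|E(G)|)$ bound; as written, however, the proposal omits the key idea. Two smaller remarks: for 3-VCSS the same omission recurs (your linkage patterns are again only among boundary vertices), and for obtaining the decomposition itself you should cite a linear-time branchwidth construction (or, in the paper's setting, the planar layer-based bound), since Bodlaender's algorithm as usually stated is for treewidth.
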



\section{Preliminaries}\label{sec: slice}
Let $G$ be an undirected planar graph with vertex set $V(G)$ and edge set $E(G)$.
We denote by $G[S]$ the subgraph of $G$ induced by $S$ where $S$ is a vertex subset or an edge subset.
We simplify $|E(G)|$ to $|G|$.
We assume we are given an embedding of $G$ in the plane.
We denote by $\partial (G)$ the subgraph induced by the edges on the outer boundary of $G$ in this embedding. 
A {\em circuit} is a closed walk that may contain repeated vertices but not repeated edges.
A {\em simple cycle} is a circuit that contains no repetition of vertices, other than the repetition of the starting and ending vertex.
A simple cycle bounds a finite region in the plane that is a topological disk.
We say a simple cycle {\em encloses} a vertex, an edge or a subgraph if the vertex, edge or subgraph is embedded in the topological disk bounded by the cycle.
We say a circuit {\em encloses} a vertex, edge or subgraph if the vertex, edge or subgraph is enclosed by a simple cycle in the circuit.

\begin{figure}
\centering
\includegraphics[scale = 1]{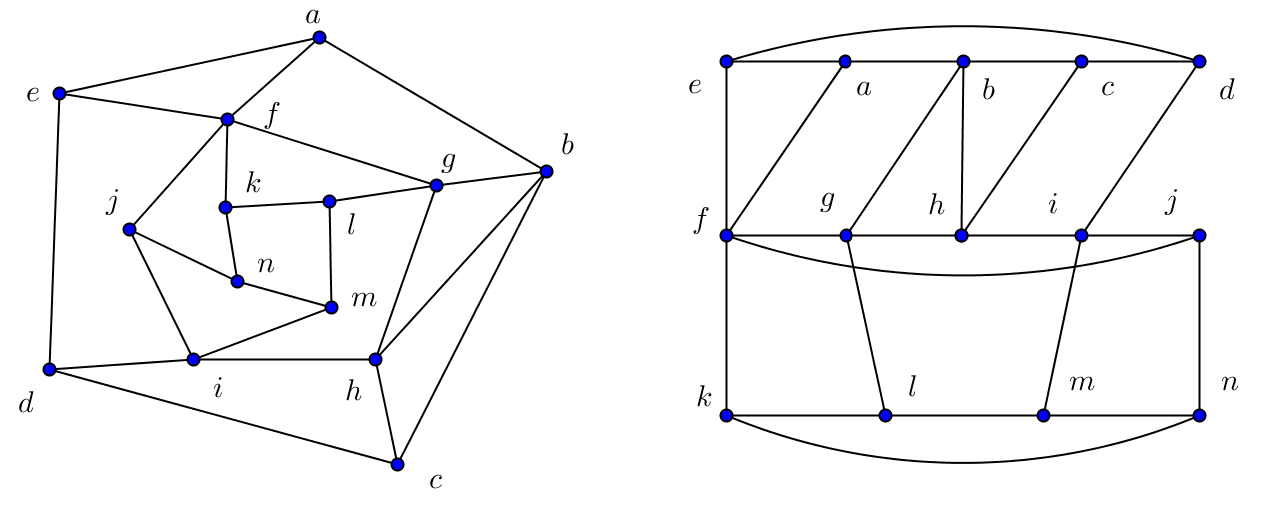}
\caption{The three horizontal lines in the right figure show the three levels in the left figure. In this example, $V_0 = \{a, b, c, d, e\}$, $V_1 = \{f, g, h, i, j\}$ and $V_2 = \{k, l, m, n\}$.}
\label{fig: double_layer}
\end{figure}
The {\em level} of a vertex of $G$ is defined as follows~\cite{Baker94}:
a vertex has level $0$ if it is on the infinite face of $G$;
a vertex has level $i$ if it is on the infinite face after deleting all the vertices of levels less than $i$.
Let $V_i$ be the set of vertices of level $i$.
Let $E_i$ be the edge set of $G$ in which each edge has both endpoints in level $i$.
Let $E_{i,i+1}$ be the edge set of $G$ where each edge has one endpoint in level $i$ and one endpoint in level $i+1$.
See Figure~\ref{fig: double_layer} as an example.
\ifFull
Then we have the following observations.
\begin{observation}\label{obs: circuit}
For any level $i \ge 0$, the boundary of any non-trivial two-edge connected component in $G[\cup_{j \ge i} V_j]$ is a maximal circuit in $\partial (G[V_i])$.
\end{observation}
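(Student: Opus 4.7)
The plan is to use the level definition to relate the outer face of $H = G[\cup_{j \ge i} V_j]$ to $\partial(G[V_i])$, and then apply the block decomposition of $H$ to identify maximal circuits.

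First, by the definition of level, the vertices on the outer face of $H$ are exactly $V_i$. Hence every edge on the outer face of $H$ has both endpoints in $V_i$ and therefore lies in $G[V_i]$. Since $G[V_i]$ is obtained from $H$ by deleting the interior vertices of $H$, the outer face only grows when we pass from $H$ to $G[V_i]$, so every such edge is still on the outer face of $G[V_i]$. This yields the containment $\partial H \subseteq \partial(G[V_i])$.

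Second, I would take any non-trivial two-edge connected component $C$ of $H$. Since $C$ is bridgeless and inherits a planar embedding from $G$, its outer face is bounded by a closed walk with no repeated edges, i.e., a circuit. Using the standard block structure of $H$, the outer face walk of $H$ decomposes into maximal edge-disjoint closed subwalks separated by bridges, and each such maximal subwalk is exactly $\partial C$ for the two-edge connected component $C$ that contributes it. Combined with the first step, this already places $\partial C$ inside $\partial(G[V_i])$ as a circuit, and identifies it as a maximal circuit of $\partial H$.

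Third, I would promote this maximality from $\partial H$ to $\partial(G[V_i])$. Any edge in $\partial(G[V_i]) \setminus \partial H$ has both endpoints in $V_i$ but lies across an interior face of $H$ as a chord of the outer face, so by planarity it is embedded in a topological region disjoint from the outer face walk of $H$. Such a chord cannot be concatenated with edges of $\partial C$ to form a larger circuit in $\partial(G[V_i])$ without either repeating an edge or crossing into a disjoint region of the plane, so $\partial C$ remains maximal.

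The main obstacle I expect is the topological planarity argument in the third step, specifically ruling out that interior chords among $V_i$-vertices could fuse two distinct maximal circuits of $\partial H$ or extend a single one into a larger circuit inside $\partial(G[V_i])$. This amounts to a careful bookkeeping of how the edges of $G[V_i]$ sit relative to the outer face walk of $H$, and I expect it to rely on the Jordan curve property applied to the outer face boundary of each $C$.
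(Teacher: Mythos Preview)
The paper records this as an observation and supplies no proof, so there is nothing to compare your argument against directly. Your outline, however, has a genuine gap at Step~2.

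You decompose the outer-face walk of $H = G[\cup_{j\ge i}V_j]$ into maximal closed subwalks and identify each with $\partial C$ for ``the two-edge connected component $C$ that contributes it,'' then conclude that $\partial C \subseteq \partial H$. But this only handles those components that \emph{do} appear on the outer face of $H$; it does not show that \emph{every} non-trivial two-edge-connected component of $H$ appears there. In general a component can be nested: take $H$ to be an outer $4$-cycle with a triangle drawn strictly inside it and attached by a single bridge; the triangle is a non-trivial two-edge-connected component whose boundary is disjoint from $\partial H$, and the observation (as literally stated) fails for it. What rules this out in the paper's setting is the standing assumption that $G$ is three-edge-connected (two-edge-connectivity already suffices): every vertex of level $<i$ lies in the unbounded face of $H$, so a nested component $C$ enclosed by a cycle $D\subseteq H$ can send no $G$-edge to $V_{<i}$ (it would have to cross $D$), forcing all $G$-edges leaving $V(C)$ to lie in $H$ and hence to be bridges of $H$; a short bridge-tree argument then produces a block inside $D$ with a one-edge cut in $G$, contradicting two-edge-connectivity. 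None of this is in your outline, and it is the real content of the observation.

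Your Step~3 concern is, by contrast, unnecessary: in fact $\partial H = \partial(G[V_i])$, not merely $\partial H \subseteq \partial(G[V_i])$. Any edge of $G[V_i]$ that is not on the outer face of $H$ is enclosed by a simple cycle contained in $\partial H \subseteq G[V_i]$, and hence is not on the outer face of $G[V_i]$ either. So there are no extra chords to worry about; the hard part is Step~2.
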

\begin{observation}\label{obs: cycle}
For any level $i \ge 0$, the boundary of any non-trivial biconnected component in $G[\cup_{j \ge i} V_j]$ is a simple cycle in $\partial (G[V_i])$.
\end{observation}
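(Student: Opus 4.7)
The plan is to derive the observation from the structure of biconnected blocks in a planar embedding together with the recursive peeling definition of the levels. Fix a non-trivial biconnected component $B$ of $H_i := G[\cup_{j\ge i} V_j]$, equipped with the planar embedding inherited from $G$. Since $B$ is 2-vertex-connected, each of its faces is bounded by a simple cycle; in particular, the outer (unbounded) face of $B$ is bounded by a simple cycle $C$. It then suffices to establish two things: (i) every vertex of $C$ belongs to $V_i$, and (ii) every edge of $C$ belongs to $\partial(G[V_i])$.

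For (i), the key claim is that $C$ lies on the outer face of $H_i$ as well: pick any vertex $v\in C$ and consider its location in the embedding of $H_i$. If $v$ were interior to $H_i$ (i.e.\ of level strictly greater than $i$), then $v$ would be separated from infinity by a closed walk of $H_i$-edges, forcing that walk to lie in the $B$-exterior face through $C$. Since $C$ bounds the \emph{unbounded} face of $B$ in the inherited embedding, such a surrounding walk would have to attach to $B$ at cut-vertices on $C$ inside $B$'s outer face, and a block-cut/Jordan-curve argument rules this out: the existence of such a surrounding walk would place $v$ on the outer face of $H_i$ after all. By the definition of levels, $v\in V_i$, as desired.

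For (ii), given (i), the cycle $C$ is a simple cycle in $G[V_i]$. Any edge $e\in C$ has the outer face of $H_i$ immediately on one of its sides in the embedding. Passing from $H_i$ to $G[V_i]$ removes only the vertices of level strictly greater than $i$, and deleting interior vertices of a planar graph can only enlarge the outer face. Hence $e$ remains incident to the outer face of $G[V_i]$, and therefore $e\in\partial(G[V_i])$. Combining (i) and (ii) shows that $C$ is a simple cycle in $\partial(G[V_i])$.

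I expect the main obstacle to be step (i): certifying that the outer face of $B$ agrees with the outer face of $H_i$ along $C$, i.e.\ ruling out a nested placement of $B$ whose outer boundary skirts through the interior of $H_i$. This is a careful planar-combinatorics argument about the block-cut structure of $H_i$ combined with the fact that the levels are defined by iteratively peeling the current outer-face vertices; the connectivity assumptions on $G$ used elsewhere in the paper make this behaviour clean. Step (ii) is then a routine consequence of (i) and the ``enlarging outer face under vertex deletion'' principle.
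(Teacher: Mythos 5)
The paper states this observation without proof, so your argument has to stand on its own, and its crux---your step (i)---is asserted rather than proved. The sentence ``a block-cut/Jordan-curve argument rules this out'' is doing all the work, and no argument based only on planarity and the block structure of $G[\cup_{j\ge i}V_j]$ can exist, because the statement is false for a merely connected planar $G$: let $G$ consist of a $4$-cycle $abcd$ bounding the outer face, a triangle $efg$ embedded inside it, and the single edge $ae$. Then $V_0=\{a,b,c,d\}$ and $V_1=\{e,f,g\}$, and for $i=0$ the triangle is a non-trivial biconnected component of $G[\cup_{j\ge 0}V_j]=G$ whose boundary is a simple cycle lying entirely at level $1$, hence not contained in $\partial(G[V_0])$ (which is the $4$-cycle). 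So the observation tacitly relies on the paper's standing assumption that $G$ is at least biconnected (triconnected in the 3-VCSS setting; compare the explicit hypotheses in Lemma~\ref{lem: outer} and Lemma~\ref{lem: bw}), and any correct proof must actually use that hypothesis; you only mention the connectivity assumptions as making the behaviour ``clean'' and never invoke them, and the nested-block configuration you claim to rule out is exactly the one that occurs in the example above. The ingredients a correct argument needs are: (a) every vertex of level $<i$ lies in the outer face of $G[\cup_{j\ge i}V_j]$, so every vertex of $G$ enclosed by a cycle of that subgraph again has level $\ge i$, and, the subgraph being induced, all $G$-edges among such vertices belong to it; (b) hence if some vertex of the boundary cycle $C$ of the block $B$ were not in $V_i$, then $B$ would be enclosed by a cycle $C'$ lying in a \emph{different} block and meeting $B$ in at most one vertex, and the block--cut structure would produce a single vertex of $G$ whose removal separates $B$ from $C'$ (two vertex-disjoint connections inside $G[\cup_{j\ge i}V_j]$ would instead contradict the maximality of $B$), contradicting biconnectivity of $G$.

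There is also a smaller gap in step (ii): you use that every edge of $C$ has the outer face of $G[\cup_{j\ge i}V_j]$ immediately on one side, but (i) as you stated it only puts the \emph{endpoints} of those edges on the outer face, and an edge whose two endpoints are incident to the outer face need not itself be incident to it (it can be shielded by other blocks attached at those endpoints). What you need is the edge-level statement that $C$ lies on the outer boundary of $G[\cup_{j\ge i}V_j]$, which the connectivity-plus-maximality argument sketched above delivers; once you have that, your monotonicity step (deleting the level-$>i$ vertices only enlarges the outer face) correctly places the edges of $C$ in $\partial(G[V_i])$.
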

\fi

For any $i\ge 0$, we define the $i$th {\em double layer} $$D_i = E_{i-1, i} \cup E_i \cup E_{i, i+1} \cup E_{i+1}$$ as the set of edges in $G[V_{i-1}\cup V_i \cup V_{i+1}] \setminus E_{i-1}$. See Figure~\ref{fig: D}
\begin{figure}
\centering
\includegraphics[scale = 1]{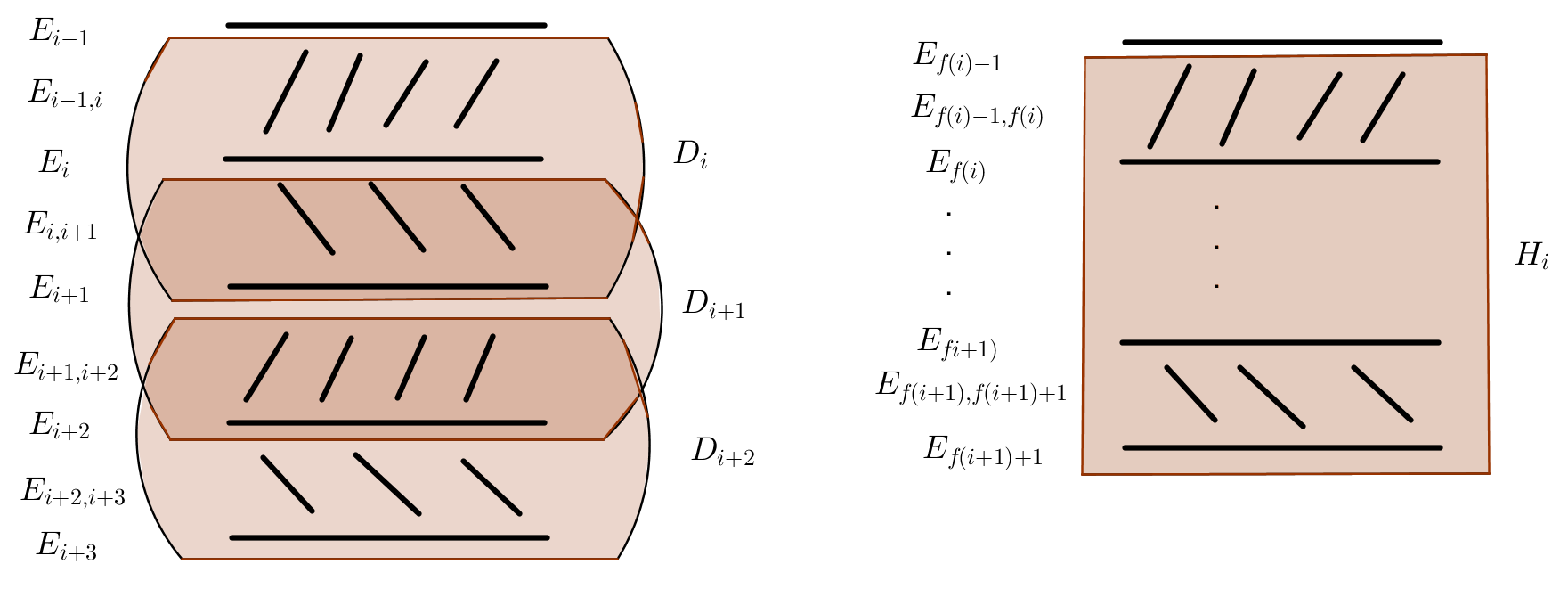}
\caption{The horizontal lines represent edge set in the same level and slashes and counter slashes represent the edge set between two adjacent levels. Left: there are three double layers: $D_i$, $D_{i+1}$ and $D_{i+2}$ represented by the shaded regions. Right: $G_i$ contains all edges in this figure, but $H_i$, represented by shaded region, does not contain $E_{f(i) - 1}$.}
\label{fig: D}
\end{figure}

Let $k$ be a constant that depends on $\epsilon$. For $j = 0, 1, \dots, k-1$, let $R_j = \cup_{i\mod k = j} D_{i}$. 
Let $t = \argmin_j |R_j|$ and $R = R_t$.
Since $\sum_{j=0}^{k-1} |R_j| \le 2|G|$, we have the following upper bound for the size of $R$.
\begin{align}
|R| \le 2/k\cdot|G| \label{equ: R}
\end{align}

Let $f(i) = ik-k+t$ for integer $i \ge 0$. 
If $ik-k+t < 0$ for any $i$, we let $f(i) = 0$.
Let $G_i = G[\cup_{f(i)-1 \le j \le f(i+1)+1} V_j]$ be the subgraph of $G$ induced by vertices in level $[f(i)-1, f(i+1)+1]$ and $H_i = G_i \setminus E_{f(i)-1}$ be a subgraph of $G_i$. 
See Figure~\ref{fig: D}.
Note that $H_i$ contains exactly the edges of double layers $D_{f(i)}$ through $D_{f(i+1)}$.
Therefore, so long as $k\ge 2$, we have 
$H_i \cap H_{i+1} = D_{f(i+1)} \subseteq R$, and $H_i \cap H_j = \emptyset$ for any $j \ne i$ and $|i-j| \ge 2$.
So for any $j \ne i$ we have 
\begin{align}
(H_i \setminus R) \cap (H_j \setminus R) = \emptyset. \label{equ: empty}
\end{align}

For each $i\ge 0$ and each maximal circuit $C_a$ in $\partial (G[V_{f(i)}])$, we construct a graph $H_i^a$, called a {\em 3EC slice}, from $G$ as follows. (See Figure~\ref{fig: Hi}.)
\begin{figure}
\centering
\includegraphics[scale = 1]{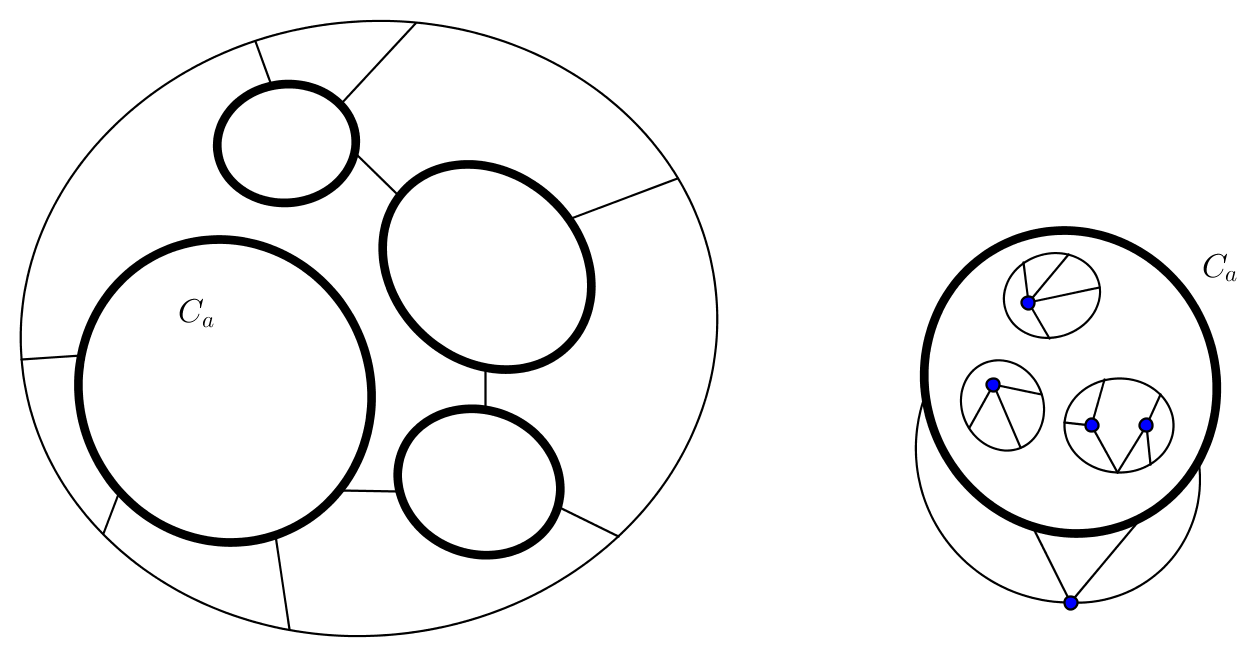}
\caption{Example for the construction of $H_i^a$. Left: a component of $G_i$. The bold cycles represent maximal circuits in $\partial(G[V_{f(i)}])$. Right: an example of $H_i^a$. The nodes represent the contracted nodes. The cycles inside of $C_a$ must belong to $E_{f(i+1)}$.}
\label{fig: Hi}
\end{figure}
Let $U$ be the subset of vertices of $H_i \setminus (V_{f(i)-1} \cup V_{f(i+1)+1})$ that are enclosed by $C_a$.
We contract each connected component of $G\setminus U$ into a node. 
After all the contractions, we delete self-loops and additional parallel edges if there are more than three parallel edges between any pair of vertices.
The resulting graph is the 3EC slice $H_i^a$.
We call these contracted vertices {\em nodes} to distinguish them from the original vertices of $G$.
We call a contracted node {\em inner} if it is obtained by contracting a component that is enclosed by $C_a$; otherwise it is {\em outer}.
Note that a 3EC slice is still planar, and two 3EC slices only share edges in a double layer: the common edges of two 3EC slices $H_i^a$ and $H_{i+1}^b$ must be in the set $E_{f(i+1)-1, f(i+1)} \cup E_{f(i+1)} \cup E_{f(i+1), f(i+1)+1}$, while the common edges of $H_i^a$ and $H_i^c$ must be in the set $E_{f(i)}$.
In the similar way, we can construct a simple graph $H_i^a$, called {\em 3VC slice}, for each $i\ge 0$ and each simple cycle $C_a$ in $\partial (G[V_{f(i)}])$.

\ifFull
\begin{remark}
There can be a 3EC slice $H_i^a$ containing only two vertices in $V_{f(i)}$. Then the slice must contain at least two parallel edges between the two vertices in $V_{f(i)}$.
But any 3EC slice $H_i^a$ cannot contain only one vertex in $V_{f(i)}$ since we define 3EC slice based on a maximal circuit and we assume there is no self-loop in $G$.
Similarly, any 3VC slice $H_i^a$ contains at least three vertices in $V_{f(i)}$. 
\end{remark}
\fi

\begin{lemma}\label{lem: outer}
If $G$ is two-edge connected (biconnected), then each 3EC (3VC) slice obtained from $G$ has at most one outer node.
\end{lemma}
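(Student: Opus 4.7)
The plan is to argue by contradiction. Suppose $H_i^a$ has two outer nodes arising from distinct components $X_1, X_2$ of $G \setminus U$, neither enclosed by $C_a$. The first step is purely topological: because the embedding is planar, no edge of $G$ can cross the closed curve traced by $C_a$, so every edge of $G$ leaving $X_j$ must terminate at a vertex of $V(C_a)$; in particular $X_1$ and $X_2$ both lie in the exterior region of $C_a$ and attach only to $V(C_a)$. I will derive a contradiction by showing that the subgraph $G[V^{\mathrm{out}}]$, induced on the set $V^{\mathrm{out}}$ of vertices of $G$ not enclosed by $C_a$, is in fact connected.

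The connectivity proof proceeds level by level. It relies on the following face-incidence fact, which I would establish first: since $G$ is 2-edge connected, every face of $G$ is bounded by a connected closed walk, so for any connected component $K$ of $G[V_j]$ exposed on the outer face of the peeled graph $G[\bigcup_{\ell \ge j} V_\ell]$, the face of $G$ lying just outside $K$ in the $V_{<j}$ region must have a connected boundary that mixes $K$-edges and $V_{<j}$-edges. This forces at least one edge from $K$ to $V_{j-1}$, since otherwise the boundary would split into two disjoint walks. When $j \le f(i)-1$, such a gluing edge stays inside $V^{\mathrm{out}}$ because all vertices at levels below $f(i)$ lie outside $C_a$. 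Combined with the base case that $G[V_0]$ is connected via the outer-face walk of $G$, induction on $j$ shows that $G[V_0 \cup \cdots \cup V_{f(i)-1}]$ is connected and lies inside $V^{\mathrm{out}}$. (The case $f(i) = 0$ makes $V^{\mathrm{out}}$ empty and the lemma is vacuous.)

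It remains to attach the rest of $V^{\mathrm{out}}$. A vertex of $V^{\mathrm{out}}$ at level $\ge f(i)$ must, by Observation~\ref{obs: circuit}, lie inside some maximal circuit $C_b \ne C_a$ at level $f(i)$ bounding another 2-edge connected component $Y_b$ of $G[\bigcup_{\ell \ge f(i)} V_\ell]$; applying the face-incidence fact to $Y_b$ supplies an edge from $Y_b$ to $V_{f(i)-1}$, joining it to the already connected stratum, and any deeper vertex of $V^{\mathrm{out}}$ enclosed by $C_b$ is reached through interior edges of $Y_b$'s region which stay in $V^{\mathrm{out}}$. Thus $G[V^{\mathrm{out}}]$ is connected, contradicting the assumed split into $X_1$ and $X_2$. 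The 3VC case is structurally identical, invoking Observation~\ref{obs: cycle} and biconnectedness, which guarantees that each face of $G$ is bounded by a simple cycle. The main obstacle I anticipate is the face-incidence fact itself: one must carefully lift a face of the peeled graph back to a face of $G$ and use 2-edge connectivity (respectively biconnectedness) to rule out a disconnected face boundary.
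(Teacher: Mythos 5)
Your reduction of the lemma to ``the vertices not enclosed by $C_a$ induce a connected subgraph'' is exactly the paper's reduction, and your treatment of the levels below $f(i)$ (each exposed component of $G[V_j]$ glues to $V_{j-1}$, induct down from the outer face) parallels the paper's Claim~\ref{clm: out}. The genuine gap is in the step that attaches the other level-$f(i)$ regions: you assert that the face-incidence fact, ``applied to $Y_b$,'' supplies an edge from $Y_b$ directly to $V_{f(i)-1}$. That is false. The face/boundary argument only yields an edge to $V_{f(i)-1}$ from each \emph{connected component} of the peeled graph (equivalently, of $G[V_{f(i)}]$), and such a component may contain several maximal circuits joined by bridge paths, including $C_a$ itself, so the guaranteed edge may emanate from $C_a$'s circuit rather than from $Y_b$. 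Concretely, take an outer cycle $D$ at level $0$ and, inside it, a level-$1$ chain $C_a - P_1 - C_b - P_2 - C_c$ of three vertex-disjoint cycles joined by single paths, with two edges from $C_a$ to $D$ and two edges from $C_c$ to $D$ and no edge from $C_b$ or the paths to $D$: this graph is two-edge connected, $C_a$, $C_b$, $C_c$ are distinct maximal circuits in $\partial(G[V_1])$, and $Y_b$ has no edge to $V_0$. The lemma still holds there (the $C_b$ side reaches $D$ through $P_2$ and $C_c$), but your argument does not establish it, because nothing in your proof rules out that the $C_b$ region attaches to the rest of the graph \emph{only through vertices of $C_a$} --- and that is precisely the situation that would create a second outer node, and precisely where two-edge connectivity must be used. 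The paper's Claim~\ref{clm: 3EC} is the missing ingredient: two distinct maximal circuits of $\partial(G[V_l])$ cannot be joined by two edge-disjoint paths inside $G[V_l]$ (else one of them is not maximal or not on the boundary), so global two-edge connectivity forces a connection of $C_b$ to $G[V_{l-1}]$, giving a \emph{path} (not necessarily a single edge) from $C_b$ to the lower stratum that avoids $V(C_a)$. Your proposal uses two-edge connectivity only for connected face boundaries, a fact already available for merely connected graphs, and never where the hypothesis actually bites; note the lemma is false for merely connected $G$ (attach $C_b$ to $C_a$ by a single bridge and nothing else), so any correct proof must use the hypothesis at this attachment step.

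A secondary omission: not every vertex of $V^{\mathrm{out}}$ at level $\ge f(i)$ lies inside a nontrivial maximal circuit $C_b$; level-$f(i)$ vertices on bridge paths or pendant trees form trivial two-edge connected components (the paper explicitly allows $Q$, and hence $C$, to be trivial), so your case analysis must also route these vertices to $V_{f(i)-1}$ while avoiding $C_a$, which again requires the Claim~\ref{clm: 3EC}-type argument rather than a direct gluing edge.
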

\begin{proof}
\ifFull
We first prove the following claims, and then by these claims we prove the lemma.
\else
We need the following claims to prove the lemma. Claim~\ref{clm: out} follows from the fact that $G$ is connected.
\fi
\begin{claim}\label{clm: out}
For any $l \ge 0$, subgraph $G[\cup_{0\le j \le l} V_j]$ is connected.
\end{claim}
\ifFull
\begin{proof}
We prove by induction on $l$ that subgraph $G[\cup_{0\le j \le l} V_j]$ is connected for any $l \ge 0$.
The base case is $l = 0$. Since $V_0$ is the set of vertices on the boundary of $G$, and since $G$ is connected, subgraph $G[V_0]$ is connected.
Assume subgraph $G[\cup_{0 \le j \le l} V_j]$ is connected for $l \ge 0$.
Then we claim subgraph $G[\cup_{0 \le j \le l+1} V_j]$ is connected.
This is because for each connected component $X$ of $G[V_{l+1}]$, there exists at least one edge between $X$ and $G[V_{l}]$, 
otherwise graph $G$ cannot be connected.
Since subgraph $G[\cup_{0\le j \le l} V_j]$ is connected, we have $G[\cup_{0 \le j \le l+1} V_j]$ is connected.
\end{proof}
\fi
\begin{claim}\label{clm: 3EC}
If $G$ is two-edge connected, then for any two distinct maximal circuits $C_a$ and $C_b$ in $\partial (G[V_{l}])$, there is a path between $C_b$ and $G[V_{l-1}]$ that is vertex disjoint from $C_a$.
\end{claim}
\begin{proof}
Note that $C_a$ and $C_b$ are vertex-disjoint, otherwise $C_a$ is not maximal. 
We argue that there cannot be two edge-disjoint paths between $C_a$ and $C_b$ in $G[V_{l}]$. 
If there are such two edge-disjoint paths, say $P_1$ and $P_2$, 
then $C_a$ cannot be a maximal circuit in $\partial (G[V_{l}])$: if $P_1$ and $P_2$ have the same endpoint in $C_a$, then $C_a$ is not maximal; otherwise there is some edge of $C_a$ that cannot be in $\partial (G[V_{l}])$.
So we know that any vertex in $C_a$ and any vertex in $C_b$ cannot be two-edge connected in $G[V_{l}]$.
Since $G$ is two-edge connected and since $G[V_{l-1}]$ must be outside of $C_a$ and $C_b$, vertices in $C_a$ and those in $C_b$ must be connected through $G[V_{l-1}]$.
Therefore, there exists a path from $C_b$ to $G[V_{l-1}]$ that does not contain any vertex in $C_a$.
\end{proof}
Similarly, we can obtain the following claim.
\begin{claim}\label{clm: 3VC}
If $G$ is biconnected, then for any two distinct simple cycles $C_a$ and $C_b$ in $\partial (G[V_{l}])$, there is a path between $C_b$ and $G[V_{l-1}]$ that is vertex disjoint from $C_a$.
\end{claim}

Now we prove the lemma.
Let $H$ be a 3EC slice based on some maximal circuit $C_a$ in $\partial (G[V_{l}])$ for some $l\ge 0$.
Let $W = \cup_{0 \le j <l} V_j$ be the set of all vertices of $G$ that have levels less than $l$, and $Q$ be a two-edge connected component in $G[V_{l}]$ disjoint from $H$.
Then the boundary of $Q$ is a maximal circuit $C$ in $\partial (G[V_{l}])$.
Note that $Q$ could be trivial and then $C$ is also trivial.
Since $G$ is connected, each simple cycle must enclose a connected subgraph of $G$.
So circuit $C$ must enclose a connected subgraph of $G$.
By Claim~\ref{clm: 3EC}, there is a path between $C$ and $G[V_{l-1}]$ disjoint from $C_a$.
Since $G[\cup_{0\le j < l}V_j]$ is connected by Claim~\ref{clm: out}, the set of vertices that are not enclosed by $C_a$ induces a connected subgraph of $G$, giving the lemma for $H$.
For 3VC slice, we can obtain the lemma in the same way by Claim~\ref{clm: out} and Claim~\ref{clm: 3VC}.  
\end{proof}

Using this lemma, we show how to construct all the 3EC slices in linear time.
First we compute the levels of all vertices in linear time by using an appropriate representation of the planar embedding such as that used by Lipton and Tarjan~\cite{LT79}.
We construct all 3EC slices $H_i^a$ from $G_i$ in $O(|V(G_i)|)$ time.
We first contract all the edges between vertices of level $f(i+1)+1$.
Next, we identify all two-edge connected components in $G[V_{f(i)}]$, which can be done in linear time by finding all the edge cuts by the result of Tarjan~\cite{Tarjan74}.
Each such component contains a maximal circuit in $\partial (G[V_{f(i)}])$.
Based on these two-edge connected components of $G[V_{f(i)}]$, we could identify 
$V(H_i^a) \setminus \{ r_i^a \}$ for all 3EC slices $H_i^a$ in $O(|V(G_i)|)$ time, where $r_i^a$ is the outer contracted node for $H_i^a$.
This is because the inner contracted nodes of a 3EC slice $H_i^a$ is the same as those contracted in $G_i$ if they are enclosed by $C_a$.
Then for each 3EC slice $H_i^a$ we add the outer node $r_i^a$, and for each vertex $u \in V(C_a)$ we add an edge between $r_i^a$ and $u$ if there is an edge between $u$ and some vertex $v$ that is not enclosed by $C_a$. 
To add those edges, we only need to travel all the edges of subgraph $G_i[V_{f(i)-1} \cup V_{f(i)}]$.
Since all these steps run in $O(|V(G_i)|$ time, and since $\sum_{i\ge 0} |V(G_i)| = O(|V(G)|)$,
we can obtain the following lemma.
\begin{lemma}\label{lem: 3EC_slice}
All 3EC slices can be constructed in $O(|V(G)|)$ time.
\end{lemma}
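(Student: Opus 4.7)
The plan is to carry out the construction one $G_i$ at a time and amortize the total cost using the fact that the subgraphs $G_i$ overlap only in a constant number of levels. First I would compute the level of every vertex of $G$; using a standard planar embedding data structure (e.g., that of Lipton and Tarjan) this is a linear-time peeling of outer faces, and once levels are recorded we can list the vertices and edges of any $G_i$ in time proportional to $|V(G_i)|$ by planarity.

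Fix $i$. The construction of every 3EC slice $H_i^a$ arising from $G_i$ proceeds in three substeps. First, I contract all edges whose both endpoints have level $f(i+1)+1$, producing the outer-side contracted nodes that are inherited by every $H_i^a$. Second, I run Tarjan's edge-cut algorithm on $G[V_{f(i)}]$ to find its two-edge connected components in linear time; by Observation~\ref{obs: circuit} their boundaries are exactly the maximal circuits $C_a$ of $\partial(G[V_{f(i)}])$. Third, using the planar embedding, a single sweep of each such component determines which inner contracted nodes (already produced in $G_i$) lie inside $C_a$, giving $V(H_i^a)\setminus\{r_i^a\}$ together with its induced edges. Finally, the outer node $r_i^a$ is installed: Lemma~\ref{lem: outer} guarantees that exactly one outer node suffices, so for each edge of $G_i[V_{f(i)-1}\cup V_{f(i)}]$ with one endpoint $u$ on $C_a$ and the other endpoint not enclosed by $C_a$, I add a single $(r_i^a,u)$-edge, capping multiplicity at three. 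All work done for slices arising from $G_i$ is accounted for by one walk along $\partial(G[V_{f(i)}])$ and one pass over the edge lists of $G_i$, for a total of $O(|V(G_i)|)$ time.

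Summing over $i$, each vertex of $G$ has a single level and therefore appears in at most two consecutive $G_i$'s, because the intervals $[f(i)-1,f(i+1)+1]$ for consecutive values of $i$ overlap in only the three levels $\{f(i+1)-1,f(i+1),f(i+1)+1\}$. Hence $\sum_{i\ge 0}|V(G_i)|\le 2|V(G)|$ and the total time is $O(|V(G)|)$. The only mildly delicate point is the embedding-based enclosure test used to route each crossing edge to the correct slice in step three: the key is that with the rotation system available, a single walk around each maximal circuit $C_a$ classifies every incident edge as inner, boundary, or outer, so no per-slice overhead beyond the size of the slice is incurred, and the argument goes through verbatim for 3VC slices by replacing maximal circuits with simple cycles and invoking Observation~\ref{obs: cycle}.
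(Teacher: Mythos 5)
Your proposal is correct and follows essentially the same route as the paper: compute vertex levels via a Lipton--Tarjan-style embedding, per $G_i$ contract the level-$f(i+1)+1$ edges, find the two-edge connected components of $G[V_{f(i)}]$ with Tarjan's cut algorithm to get the maximal circuits, attach the single outer node (justified by Lemma~\ref{lem: outer}) by scanning $G_i[V_{f(i)-1}\cup V_{f(i)}]$, and amortize via $\sum_i |V(G_i)| = O(|V(G)|)$. No substantive differences from the paper's argument.
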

Since we can compute all the biconnected components in $G[V_{f(i)}]$ in linear time based on depth-first search by the result of Hopcroft and Tarjan~\cite{HT73}, we can obtain a similar lemma for 3VC sllices in a similar way.
\begin{lemma}\label{lem: 3VC_slice}
All 3VC slices can be constructed in $O(|V(G)|)$ time.
\end{lemma}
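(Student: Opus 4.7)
The plan is to mirror the proof of Lemma~\ref{lem: 3EC_slice} essentially line by line, substituting biconnected components for two-edge connected components and simple cycles for maximal circuits. First I would compute the level of every vertex of $G$ in linear time using the planar embedding representation of Lipton and Tarjan~\cite{LT79}. Then, for each $i \ge 0$, I would construct all 3VC slices of the form $H_i^a$ from $G_i$ in $O(|V(G_i)|)$ time, and since $\sum_i |V(G_i)| = O(|V(G)|)$, the total running time is linear.

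The core of the construction is the same as in Lemma~\ref{lem: 3EC_slice}: first contract all edges between vertices of level $f(i+1)+1$, then identify the simple cycles of $\partial(G[V_{f(i)}])$ that serve as the boundaries $C_a$. The replacement step is to observe (via an analog of Observation~\ref{obs: cycle}) that these simple cycles are exactly the boundaries of the non-trivial biconnected components of $G[V_{f(i)}]$. These biconnected components can be enumerated in linear time by the depth-first-search algorithm of Hopcroft and Tarjan~\cite{HT73}.

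Once the boundary cycles $C_a$ are identified, the set of inner nodes of $H_i^a$ is read directly from the components already contracted in $G_i$, keeping only those enclosed by $C_a$. By Lemma~\ref{lem: outer}, there is at most one outer node $r_i^a$ to add per slice (assuming $G$ is biconnected, which we may assume for 3-VCSS since otherwise no feasible solution exists). I then add $r_i^a$ and, by scanning the edges of $G_i[V_{f(i)-1} \cup V_{f(i)}]$, insert an edge from $r_i^a$ to each $u \in V(C_a)$ that has an incident edge leaving the region enclosed by $C_a$. Finally, self-loops and parallel edges are removed (for 3VC slices, the input graph is simple and no parallel edges are needed).

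The only point that needs care is the bookkeeping that ensures each edge of $G_i$ is charged to at most a constant number of slices, so that the work per level is genuinely $O(|V(G_i)|)$; this follows because each vertex in $V_{f(i)} \cup V_{f(i)-1} \cup V_{f(i+1)+1}$ participates in a bounded number of incidence queries during the scan. I do not expect a real obstacle here: the reasoning is entirely parallel to the 3EC case, with Hopcroft--Tarjan replacing the edge-cut algorithm of Tarjan~\cite{Tarjan74} and Claim~\ref{clm: 3VC} playing the role of Claim~\ref{clm: 3EC} in justifying the single outer node.
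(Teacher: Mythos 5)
Your proposal matches the paper's argument: the paper constructs 3VC slices exactly as the 3EC slices, replacing Tarjan's edge-cut computation with the Hopcroft--Tarjan depth-first-search algorithm for biconnected components of $G[V_{f(i)}]$, and inherits the $O(|V(G_i)|)$ per-level bound with $\sum_i |V(G_i)| = O(|V(G)|)$. No gaps; this is the same route, spelled out in slightly more detail than the paper's one-line appeal to similarity.
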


We review the definition of {\em branchwidth} given by Seymour and Thomas~\cite{ST94}. A {\em branch decomposition} of a graph $G$ is a hierarchical clustering of its edge set. We represent this hierarchy by a binary tree, called the {\em decomposition tree}, where the leaves are in bijection with the edges of the original graph. 
If we delete an edge $\alpha$ of this decomposition tree, the edge set of the original graph is partitioned into two parts $E_{\alpha}$ and $E(G) \setminus E_{\alpha}$ according to the leaves of the two subtrees. 
The set of vertices in common between the two subgraphs induced by $E_{\alpha}$ and $E(G) \setminus E_{\alpha}$ is called the {\em separator} corresponding to $\alpha$ in the decomposition.
The {\em width} of the decomposition is the maximum size of the separator in that decomposition, and the {\em branchwidth} of $G$ is the minimum width of any branch decomposition of $G$.
\ifFull
We borrow the following lemmas from Klein and Mozes~\cite{planar}, which are helpful in bounding the branchwidth of our graphs.
\begin{lemma}\label{lem: contracting}{\rm (Lemma 14.5.1~\cite{planar})}
Deleting or contracting edges does not increase the branchwidth of a graph.
\end{lemma}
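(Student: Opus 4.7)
The plan is to prove both statements by a direct surgical modification of a branch decomposition. Fix a branch decomposition $T$ of $G$ of width $w$, and let $\ell$ be the leaf of $T$ corresponding to the edge $e$ to be deleted or contracted. Let $p$ be the parent of $\ell$ and let $\ell'$ be the other child of $p$. I will form a tree $T'$ by deleting the leaf $\ell$ and then suppressing $p$, i.e.\ identifying the two tree-edges at $p$ into a single new tree edge between $\ell'$ and $p$'s former parent. The leaves of $T'$ are in bijection with $E(G)\setminus\{e\}$, which is exactly the edge set of $G\setminus e$ (and also of $G/e$), so $T'$ is a candidate branch decomposition of the modified graph. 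The goal is to show its width is at most $w$.

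For the deletion case, consider any tree-edge $\alpha$ of $T'$. If $\alpha$ already appeared in $T$, then the two sides of the corresponding partition of $E(G)\setminus\{e\}$ are obtained from the two sides of the partition in $T$ by removing $e$ from whichever side contained it. Removing an edge can only remove vertices from the edge-induced vertex set on that side, so the separator in $T'$ is a subset of the separator in $T$ and its size is at most $w$. If instead $\alpha$ is the new merged tree-edge created by suppressing $p$, then one side of its partition is $E_{\ell'}$ and the other is $E(G)\setminus(E_{\ell'}\cup\{e\})$; the resulting separator is contained in the separator of the old tree-edge $p\ell'$ (or equivalently of the old tree-edge above $p$), so again its size is at most $w$.

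For the contraction case, the same tree $T'$ works, but separators must be re-interpreted in $G/e$ where the endpoints $u,v$ of $e$ have been identified into a single vertex $uv$. For each tree-edge of $T'$, the new separator is the image of the corresponding old separator under the quotient map sending $\{u,v\}\mapsto uv$ and fixing every other vertex. Since a quotient map can only identify vertices, it cannot enlarge a set; the new separator has size at most $w$. Combined with the deletion argument (since contraction equals deletion plus identification of the endpoints), this handles both operations uniformly.

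The only subtlety worth flagging is the degenerate topology of $T$ near the root in small cases: when $|E(G)|\le 2$, the branch decomposition has at most two leaves and suppression degenerates the tree. These instances are handled by the convention that graphs with at most one edge have branchwidth $0$, which trivially dominates the claim; in all other cases $T'$ is a well-formed binary tree with at least two leaves, and the separator argument above applies verbatim. I expect this bookkeeping, rather than the combinatorial content, to be the main thing to get right.
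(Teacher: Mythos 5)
The paper does not prove this lemma at all---it is imported verbatim as Lemma 14.5.1 of Klein and Mozes---so there is no in-paper argument to compare against. Your surgical proof (drop the leaf of $e$, suppress its degree-two neighbor, argue separator by separator) is the standard textbook argument, and the deletion half is airtight: each surviving tree-edge's partition loses only the edge $e$ from one side, so each separator can only shrink, and the merged tree-edge's separator sits inside the old separator at $p$.

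The contraction half is correct in its main claim but under-justified at exactly the delicate spot, and one of your framing sentences is actually unsound as a general principle. ``Contraction equals deletion plus identification of the endpoints'' does not combine with the deletion argument, because identifying two vertices can \emph{increase} separators: if, after $e$ is removed, all remaining edges at $u$ lie on one side of a partition and all remaining edges at $v$ lie on the other, then neither $u$ nor $v$ is in the deletion separator, yet the merged vertex $uv$ enters the separator after identification. What saves the argument is comparing against the decomposition of the \emph{original} graph $G$, where $e$ is still present on one of the two sides: if the merged vertex $uv$ lies in a new separator, then some side of the partition contains an edge (other than $e$) at, say, $u$, while $e$ itself puts $u$ on the other side, so $u$ already belonged to the old separator. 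This is precisely why your claim that the new separator is contained in the quotient image of the old one is true, and it is the sentence that needs to be spelled out; with it, $|S'_\alpha|\le |q(S_\alpha)|\le |S_\alpha|$ and the lemma follows. So: right construction, right conclusion, but replace the ``deletion plus identification'' shortcut with this one-line verification.
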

\begin{lemma}\label{lem: bounded_bw}{\rm (Lemma 14.6.1~\cite{planar} rewritten)}
There is a linear-time algorithm that, given a planar embedded graph $G$, returns a branch-decomposition whose width is at most twice of the depth of a rooted spanning tree of $G$.
\end{lemma}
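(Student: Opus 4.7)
The plan is to build the branch-decomposition tree $\mathcal{T}$ so that its combinatorial skeleton mirrors a rooted spanning tree $T$ of $G$, and to argue that separators of $\mathcal{T}$ lie along at most two root-to-node paths of $T$, each of length at most $d := \mathrm{depth}(T)$.

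First, I would compute a rooted spanning tree $T$ of $G$ in linear time (for instance, a DFS tree from an arbitrary root), and then binarize $T$ by replacing every node of degree greater than three by a short chain of auxiliary degree-three nodes. These auxiliary nodes are bookkeeping devices only; they carry no edges of $G$ and do not increase the depth of any subtree by more than a constant factor that can be absorbed into the construction. Next, I would turn this binarized tree into a branch decomposition $\mathcal{T}$ by attaching the edges of $G$ as pendant leaves: for each tree edge $e = \{v,\mathrm{parent}(v)\}$, attach $e$ at the node representing $v$; for each non-tree edge $e = uv$, attach $e$ at the node representing $\mathrm{LCA}_T(u,v)$.

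Second, I would bound the width. Consider any internal edge $\alpha$ of $\mathcal{T}$. Cutting $\alpha$ partitions the leaves (hence $E(G)$) into a subset $E_\alpha$ and its complement, where $E_\alpha$ is the set of leaves attached inside some subtree of the binarized $T$ rooted at a node $v$. The separator at $\alpha$ consists of the vertices of $G$ that are endpoints of edges on both sides. A separator vertex $x$ falls into one of two cases: either $x$ sits on the root-to-$v$ path in $T$ (there are at most $d$ such vertices, accounting for $v$ itself and for ancestors that are endpoints of non-tree edges attached at an ancestor of $v$ and having a descendant in $T_v$); or $x$ lies inside $T_v$ and is an endpoint of a non-tree edge attached outside $T_v$, in which case its partner endpoint forces $x$ to lie on a single root-to-descendant path determined by how such edges are attached by the LCA rule — and by choosing the placement of each non-tree edge consistently along the deeper tree-path from the LCA to the deeper endpoint, all such $x$ in $T_v$ can be made to lie on a single root-to-descendant path of length at most $d$. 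Summing the two cases yields a separator of size at most $2d$.

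Third, I would confirm the linear running time: computing $T$ is linear; batch-computing LCAs of the $O(|E(G)|)$ non-tree edges can be done in $O(|E(G)|)$ time by a standard offline-LCA algorithm (Tarjan's union-find-based scheme); binarizing and attaching leaves are linear in the size of the resulting tree, which is $O(|E(G)|)$. The main obstacle I expect is making the "single root-to-descendant path" argument in Step 2 precise: the naive LCA-placement can route many back-edges through deep-but-unrelated vertices of $T_v$, seemingly inflating the separator beyond $2d$. The fix is to slide each non-tree edge down from its LCA toward whichever endpoint is deeper, so that on each cut the non-tree-edge contributions to the separator from inside $T_v$ are forced onto one descending path, at which point the factor of two in the claimed bound appears naturally from the two root-to-node paths.
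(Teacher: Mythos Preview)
Your proposal has a genuine gap: it never uses the planarity of $G$, yet the statement is false for non-planar graphs. For a concrete witness, take $K_n$ with a balanced binary spanning tree of depth $d=\lceil\log_2 n\rceil$; the branchwidth of $K_n$ is $\Theta(n)=\Theta(2^d)$, not $O(d)$. Any argument that does not exploit the embedding cannot possibly yield the bound $2d$.

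The step that fails is precisely the one you flag as ``the main obstacle'': the claim that the separator vertices lying strictly inside $T_v$ can be forced onto a \emph{single} root-to-descendant path. Sliding a non-tree edge from its LCA toward its deeper endpoint only controls where that edge is attached in $\mathcal{T}$; it does nothing to align the \emph{inside} endpoints of the many non-tree edges that straddle the cut at $v$. In a general (even BFS) spanning tree these inside endpoints can be scattered over all of $T_v$, and there is no mechanism in your construction that collapses them to one path. In the $K_n$ example above, essentially every vertex of $T_v$ is such an endpoint.

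For comparison, the paper does not prove this lemma; it quotes it from Klein and Mozes. The standard argument there is intrinsically planar: one uses the tree--cotree partition, taking the spanning tree $T^*$ of the dual $G^*$ formed by the duals of the non-tree edges of $T$. The branch decomposition is built from $T^*$, and each internal edge of $T^*$ corresponds to a non-tree edge $e$ of $G$ whose removal separates the faces into two sides bounded by the fundamental cycle of $e$ in $T$; that cycle has at most $2d+1$ vertices (two root-paths plus $e$), which gives the width bound. Planarity is what guarantees both that the cotree is a tree and that each separator is a single fundamental cycle.
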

\else
The following lemma follows from known relationships between branchwidth and the number of layers of a planar graph.
\fi
\begin{lemma}\label{lem: bw}
If $G$ is two-edge connected (biconnected), the branchwidth of any 3EC (3VC) slice is $O(k)$.
\end{lemma}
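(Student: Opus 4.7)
The plan is to apply Lemma~\ref{lem: bounded_bw}: it suffices to exhibit a rooted spanning tree of $H_i^a$ of depth $O(k)$. Since the non-contracted vertices of $H_i^a$ all lie in the $k+1$ consecutive levels $V_{f(i)}, V_{f(i)+1}, \ldots, V_{f(i+1)}$, and each inner contracted node is adjacent only to vertices at one of the two extreme levels, one expects that a BFS-like tree rooted at (or near) the outer node $r$ should descend in at most $O(k)$ steps.

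First I would observe that from any non-contracted vertex $v \in V_j \cap U$ with $f(i) \le j \le f(i+1)$, the definition of levels provides a path $v = v_j, v_{j-1}, \ldots, v_{f(i)}$ in $G$ with $v_\ell \in V_\ell$ and $v_\ell v_{\ell-1} \in E(G)$. By planarity this path cannot cross $C_a$, so every $v_\ell$ for $\ell \ge f(i)$ lies in $U$ and hence in $H_i^a$. This yields, for each such $v$, a path in $H_i^a$ of length $j - f(i) \le k$ from $v$ to some vertex on level $f(i)$.

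Next I would handle the root. By Lemma~\ref{lem: outer}, $H_i^a$ has at most one outer node $r$ (if $r$ does not exist, root at any vertex of $C_a$ and use a symmetric argument). Because the outer component of $G \setminus U$ wraps around $C_a$ from outside in the planar embedding, after contracting it to the single vertex $r$ every edge of $C_a$ becomes incident to the outer face of $H_i^a$, and hence every vertex of $V(C_a)$ lies on the boundary of that outer face. I would then augment $H_i^a$ by dummy edges from $r$ to every vertex of $V(C_a)$, drawn as a non-crossing star inside the outer face; call the resulting planar supergraph $H^+$. In $H^+$, BFS from $r$ has depth at most $k+2$: one edge from $r$ to some vertex of $V(C_a)$, at most $k$ descending edges along the paths constructed above, and one final edge to reach any inner contracted node (which by construction is adjacent only to vertices in $V_{f(i+1)} \cup V_{f(i+1)+1}$).

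Applying Lemma~\ref{lem: bounded_bw} to $H^+$ then gives branchwidth $O(k)$; Lemma~\ref{lem: contracting}, applied to the deletion of the dummy edges from $H^+$ to recover $H_i^a$, yields $\operatorname{bw}(H_i^a) \le \operatorname{bw}(H^+) = O(k)$. The 3VC case is analogous, using that $C_a$ is a simple cycle and $G$ is biconnected, so that the outer face structure is even cleaner. The main obstacle I foresee is justifying the planar drawing of the dummy edges, which reduces to verifying that after contraction the outer face of $H_i^a$ is bounded by $r$ together with all of $C_a$; this rests on the fact that the outer component of $G \setminus U$ is a single connected subgraph occupying the exterior of $C_a$, so its contraction places $r$ on the single outer face whose boundary traces $C_a$ and makes every vertex of $V(C_a)$ accessible from $r$ through that face.
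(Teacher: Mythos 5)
Your overall strategy (augment the slice planarly, root a shallow spanning tree at the outer node $r$, invoke Lemma~\ref{lem: bounded_bw}, then remove the added edges via Lemma~\ref{lem: contracting}) is exactly the paper's strategy, but there is a genuine gap in the step that is supposed to make the tree shallow. You claim that ``the definition of levels provides a path $v=v_j,v_{j-1},\ldots,v_{f(i)}$ with $v_\ell\in V_\ell$ and $v_\ell v_{\ell-1}\in E(G)$,'' i.e.\ that every vertex of level $\ell$ has a neighbor of level $\ell-1$. This is false, even for biconnected $G$: take an outer cycle $C$ (level $0$) and a path $u,v_1,\ldots,v_t,w$ embedded inside it whose only attachments to $C$ are at $u$ and $w$. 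The graph is biconnected, every $v_s$ has level $1$, yet no $v_s$ has a level-$0$ neighbor, and the graph distance from $v_{t/2}$ to level $0$ is $\Omega(t)$, unrelated to the number of levels. The same phenomenon can occur inside a slice at any intermediate level, so the BFS tree in your $H^+$ (which only adds the star at $r$, i.e.\ only repairs the outermost level) can have depth $\Omega(|V|)$ rather than $k+2$, and the depth bound you feed into Lemma~\ref{lem: bounded_bw} does not follow.

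What is missing is precisely the augmentation the paper performs at \emph{every} level of the slice, not just at the root: add extra edges, while maintaining planarity, so that each vertex of level $l$ acquires a neighbor of level $l-1$ (this is possible because a level-$l$ vertex lies on the outer face of the subgraph induced by levels $\ge l$, hence shares a face of the embedding with lower-level structure into which such an edge can be drawn); in the augmented graph $K_i^a$ a BFS tree rooted at $r$ really does have depth $O(k)$, and since $H_i^a$ is obtained from $K_i^a$ by deleting the added edges, Lemma~\ref{lem: contracting} transfers the $O(k)$ branchwidth bound back to the slice. Your argument for placing the star at $r$ planarly (every edge leaving the contracted exterior must land on $C_a$, so $r$ sees all of $V(C_a)$ across the outer region) is fine, but it only handles the transition from $r$ to level $f(i)$; without the analogous repair at the interior levels the proof does not go through.
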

\ifFull
\begin{proof}
We prove this lemma for 3EC slices when $G$ is two-edge connected; by the same proof we can obtain the lemma for 3VC slices when $G$ is biconnected.
Let $H_i^a$ be a 3EC slice.
By Lemma~\ref{lem: outer}, there is at most one outer contracted node $r$ for $H_i^a$.
Let the level of $r$ be $f(i)-1$, and the level of all inner contracted nodes be $f(i+1)+1$.
Now we add edges to ensure that every vertex of level $l$ has a neighbor of level $l-1$ for each $f(i) \le l\le f(i)+1$, while maintaining planarity.
Call the resulting graph $K_i^a$.
Then the branchwidth of $H_i^a$ is no more than that of $K_i^a$ by Lemma~\ref{lem: contracting}.
Now we can find a breadth-first-tree of $K_i^a$ rooted at $r$ that has depth at most $k+3$. 
By Lemma~\ref{lem: bounded_bw}, the branchwidth of $K_i^a$ is $O(k)$ and that of $H_i^a$ is at most $O(k)$.  
\end{proof}
\fi

\section{PTAS for $3$-ECSS}\label{sec: EC}
In this section, we prove Theorem~\ref{thm: EC}.
W.l.o.g. we assume $G$ has at most three parallel edges between any pair of vertices.
Then $G$ is our spanner.
Let $\opt(G)$ be an optimal solution for $G$.
Since each vertex in $\opt(G)$ has degree at least three,
we have 
\begin{align}
2|\opt(G)| \ge 3 |V(G)|. \label{equ: opt1}
\end{align}
If $G$ is simple, then by planarity the number of edges is at most three times of the number of vertices.
Since there are at most three parallel edges between any pair of vertices, we have 
\begin{align}
|G| \le 9 |V(G)|.  \label{equ: opt2}
\end{align}
Combining~(\ref{equ: opt1}) and~(\ref{equ: opt2}), we have $|G| \le 6 |\opt(G)|$. 

In this section, we only consider 3EC slices. So when we say slice, we mean 3EC slice in this section.
We construct all the slices from $G$.
By~(\ref{equ: R}), we have the following 
\begin{align}
|R| \le 2/k \cdot |G| \le 12/k \cdot |\opt(G)|. \label{equ: EC1}
\end{align}

We borrow the following lemma from Nagamochi and Ibaraki~\cite{NI92}.
\begin{lemma}\label{lem: contraction}{\rm (Lemma 4.1 (2)~\cite{NI92} rewritten)}
Let $G$ be a $k$-edge connected graph with more than 2 vertices. Then after contracting any edge in $G$, the resulting graph is still $k$-edge connected.
\end{lemma}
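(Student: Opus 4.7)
The plan is to use the min-cut characterization of edge connectivity. Let $e = uv$ be the contracted edge and let $G' = G/e$ be the result, with $w$ denoting the new contracted vertex. I want to show that for every proper non-empty $S' \subsetneq V(G')$, the edge boundary $\partial_{G'}(S')$ has size at least $k$.

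The key observation is that cuts in $G'$ lift to cuts in $G$ in the natural way: given $S' \subsetneq V(G')$, define $S \subseteq V(G)$ by replacing $w$ with $\{u,v\}$ if $w \in S'$ (otherwise $S = S'$). By construction $u$ and $v$ lie on the same side of the partition $(S, V(G) \setminus S)$, so the contracted edge $e$ does not contribute to $\partial_G(S)$, and every other edge of $G$ is in $\partial_G(S)$ if and only if its image in $G'$ is in $\partial_{G'}(S')$. Consequently $|\partial_{G'}(S')| = |\partial_G(S)|$.

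It remains to verify that $S$ is a proper non-empty subset of $V(G)$, so that the $k$-edge-connectivity of $G$ may be invoked. Non-emptiness of $S$ is immediate from non-emptiness of $S'$. For properness, since $S' \ne V(G')$ and $|V(G')| = |V(G)| - 1 \ge 2$ (using the hypothesis that $G$ has more than two vertices), there exists a vertex in $V(G') \setminus S'$, and this vertex also lies in $V(G) \setminus S$. Thus $\partial_G(S)$ is a valid edge cut of $G$, and the $k$-edge-connectivity of $G$ gives $|\partial_G(S)| \ge k$, hence $|\partial_{G'}(S')| \ge k$, as required.

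There is no real obstacle in this argument; the only subtle point is the role of the hypothesis $|V(G)| > 2$, which is precisely what ensures $G'$ has at least two vertices so that edge connectivity is defined and that every proper non-empty $S' \subsetneq V(G')$ lifts to a proper non-empty $S \subsetneq V(G)$. Without this hypothesis, contracting the single edge of $K_2$ would produce a one-vertex graph where the claim is vacuous or ill-posed. The proof is essentially a one-line correspondence between cuts, and matches the style of the Nagamochi--Ibaraki reference.
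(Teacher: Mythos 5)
Your argument is correct: the paper itself gives no proof of this lemma (it is quoted from Nagamochi and Ibaraki~\cite{NI92}), and your cut-lifting correspondence is exactly the standard argument behind that citation. One point worth making explicit, since the paper works with multigraphs for 3-ECSS: edges parallel to the contracted edge $uv$ become self-loops and are discarded in $G/e$, but since $u$ and $v$ lie on the same side of the lifted partition these edges never belong to $\partial_G(S)$, so your equality $|\partial_{G'}(S')| = |\partial_G(S)|$ still holds and the proof goes through unchanged.
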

Recall that our slices are obtained from $G$ by contractions and deletions of self-loops. By the above lemma, we have the following lemma.
\begin{lemma}\label{lem: EC1}
If $G$ is three-edge connected, then any slice is three-edge connected.
\end{lemma}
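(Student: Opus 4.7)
The plan is to show that each of the two kinds of operations used to construct $H_i^a$ from $G$---the contractions of connected components of $G \setminus U$, and the subsequent deletions of self-loops and excess parallel edges---preserves three-edge connectivity.

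For the contractions, I would first observe that contracting a connected subgraph $K$ to a single node is the same as contracting the edges of an arbitrary spanning tree of $K$ one by one. Thus the whole contraction phase decomposes into a sequence of single-edge contractions, to which Lemma~\ref{lem: contraction} can be applied iteratively: each such contraction preserves three-edge connectivity as long as the current graph has more than two vertices. To verify this hypothesis throughout the entire sequence, I would invoke the Remark following the definition of a 3EC slice, which guarantees that $H_i^a$ contains at least two vertices of $V_{f(i)}$; hence $|V(H_i^a)| \ge 2$, and since each single-edge contraction decreases the vertex count by exactly one, every intermediate graph in the sequence has at least three vertices, with the possible exception of the very last one. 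Consequently the graph that results after all the contractions (but before the deletions) is three-edge connected.

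For the deletions, removing self-loops trivially preserves edge connectivity. When we delete an excess parallel edge $e = (u,v)$, the construction guarantees that at least three $(u,v)$-edges remain. Any edge cut $(A,B)$ in the resulting graph then falls into one of two cases: either $u$ and $v$ lie on the same side of the cut, in which case $(A,B)$ is also a cut of the same size in the pre-deletion graph and thus has size at least 3 by the three-edge connectivity established in the previous step; or $u$ and $v$ lie on opposite sides, in which case $(A,B)$ must contain all three remaining parallel $(u,v)$-edges and so again has size at least 3. Hence this step also preserves three-edge connectivity, completing the argument.

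The only real subtlety is checking the ``more than two vertices'' hypothesis of Lemma~\ref{lem: contraction} at every step of the contraction sequence, and this is immediate from the Remark; everything else amounts to straightforward bookkeeping with cuts.
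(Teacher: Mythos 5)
Your proof is correct and follows essentially the same route as the paper, which also deduces the lemma directly from Lemma~\ref{lem: contraction} applied to the contractions and the (harmless) deletions of self-loops and excess parallel edges. Your extra bookkeeping---decomposing each component contraction into single-edge contractions, checking the vertex-count hypothesis via the Remark, and the cut argument for parallel-edge deletion---just makes explicit what the paper leaves implicit.
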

Since we can include all the edges in shared double layers, they are ``free'' to us.
So we would like to include those edges as many as possible in the solution for each slice.
This can be achieved by defining an edge-weight function $w$ for each slice $H_i^a$: assign weight $0$ to edges in $D_{f(i)} \cup D_{f(i+1)}$ and weight $1$ to other edges.
By Lemma~\ref{lem: EC1}, any slice is three-edge connected.
We solve the minimum-weight 3-ECSS problem on $H_i^a$ in linear time by Theorem~\ref{thm: dp}.
Let $Sol(H_i^a)$ be a feasible solution for the minimum-weight 3-ECSS problem on $H_i^a$. 
Then it is also a feasible solution for 3-ECSS on $H_i^a$.
Let $\opt_w(H_i^a)$ be an optimal solution for the minimum-weight 3-ECSS problem on $H_i^a$.
\ifFull
Then we have the following observation.
\begin{observation}\label{obs: weight}
The weight of any solution $Sol(H_i^a)$ is the same as the number of its common edges with $H_i^a \setminus R$, that is $$w(Sol(H_i^a)) = |Sol(H_i^a) \cap (H_i^a \setminus R) |.$$
\end{observation}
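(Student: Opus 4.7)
The plan is to unpack the definition of the weight function and then show that, restricted to $H_i^a$, the edge set $R$ coincides with $D_{f(i)} \cup D_{f(i+1)}$. Once this is established, the observation follows immediately from counting.

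First, by the definition of $w$ on the slice $H_i^a$ (weight $0$ on $D_{f(i)} \cup D_{f(i+1)}$ and weight $1$ on all other edges of $H_i^a$), the weight of any subgraph $Sol(H_i^a) \subseteq H_i^a$ equals the number of its edges lying outside $D_{f(i)} \cup D_{f(i+1)}$; that is,
\[
w(Sol(H_i^a)) \;=\; |Sol(H_i^a) \setminus (D_{f(i)} \cup D_{f(i+1)})|.
\]
So it suffices to prove $Sol(H_i^a) \setminus (D_{f(i)} \cup D_{f(i+1)}) = Sol(H_i^a) \cap (H_i^a \setminus R)$, and since $Sol(H_i^a) \subseteq E(H_i^a)$, this reduces to the identity
\[
R \cap E(H_i^a) \;=\; (D_{f(i)} \cup D_{f(i+1)}) \cap E(H_i^a).
\]

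The inclusion $\supseteq$ is immediate: since $f(i) \equiv f(i+1) \equiv t \pmod{k}$, both $D_{f(i)}$ and $D_{f(i+1)}$ are among the double layers whose union defines $R = R_t$. For the reverse inclusion, recall that $H_i$ contains exactly the edges of the double layers $D_{f(i)}, D_{f(i)+1}, \ldots, D_{f(i+1)}$, and by construction $E(H_i^a) \subseteq E(H_i)$ (the contractions and removals that form $H_i^a$ from $H_i$ only delete edges). Among the indices $f(i), f(i)+1, \ldots, f(i+1) = f(i)+k$, only $f(i)$ and $f(i+1)$ are $\equiv t \pmod k$, so no other double layer in the range appears in $R$. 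The remaining double layers of $R$ have index $\le f(i)-k$ or $\ge f(i+1)+k$; since $k \ge 2$ and each double layer $D_j$ consists of edges with endpoint levels in $[j-1, j+1]$, those layers contain only edges at levels $\le f(i)-1$ or $\ge f(i+1)+1$, and one checks that such edges are not present in $H_i$ (either both endpoints fall outside the level range of $G_i$, or they lie in $E_{f(i)-1}$, which is deleted when forming $H_i$). Hence no double layer of $R$ other than $D_{f(i)}$ and $D_{f(i+1)}$ contributes edges to $H_i^a$.

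The only real bookkeeping obstacle is this last step, checking that the ``far'' double layers in $R$ do not sneak any edges into $H_i^a$. This is essentially forced by the spacing condition $k \ge 2$ together with the definitions of $G_i$ and $H_i$, so it should go through by a short case analysis on the boundary level ($k=2$ being the tight case). Combining the two displayed equations then gives $w(Sol(H_i^a)) = |Sol(H_i^a) \cap (H_i^a \setminus R)|$, as claimed.
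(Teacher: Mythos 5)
Correct, and essentially the same as the paper's (implicit) argument: the paper states this observation without proof, and the justification it relies on is exactly what you verify --- $E(H_i^a)\subseteq E(H_i)$, so $H_i^a\cap R\subseteq D_{f(i)}\cup D_{f(i+1)}$ (the containment recorded in the proof of Lemma~\ref{lem: EC3}), while $D_{f(i)},D_{f(i+1)}\subseteq R$ because $f(i)\equiv f(i+1)\equiv t\pmod{k}$. Your only tacit assumption, that the clamping in the definition of $f$ never breaks $f(i)\equiv t\pmod{k}$ (an issue only for $i=0$ when $t\neq 0$), is one the paper makes as well.
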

\fi
Let ${\cal C}_i$ be the set of all maximal circuits in $\partial(G[V_{f(i)}])$.
\ifFull
Then we have the following lemmas.
\else
Then by the definition of our edge-weight function, we can prove the following lemma which is helpful to bound the size of our solution.
\fi
\begin{lemma}\label{lem: EC3}
For any $i\ge 0$, let $S_i = \bigcup_{ C_a \in {\cal C}_i} \opt_w(H_i^a)$. Then
we can bound the number of edges in $S_i$ by the following inequality $$|S_i| \le |\opt(G) \cap (H_i \setminus R)| + |D_{f(i)}| + |D_{f(i+1)}|.$$
\end{lemma}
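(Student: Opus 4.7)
The plan is to partition the edges of $S_i$ according to the weight function $w$, which assigns weight $0$ to edges in $D_{f(i)} \cup D_{f(i+1)}$ and weight $1$ to other edges. All weight-$0$ edges of $S_i$ lie in $D_{f(i)} \cup D_{f(i+1)}$, so they contribute at most $|D_{f(i)}| + |D_{f(i+1)}|$ to $|S_i|$. For the weight-$1$ edges, I use the observation from Section~\ref{sec: slice} that two distinct slices $H_i^a$ and $H_i^b$ share only edges in $E_{f(i)} \subseteq D_{f(i)}$, all of which are weight-$0$; hence the weight-$1$ edges of different $\opt_w(H_i^a)$ are pairwise disjoint inside $S_i$. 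Thus the number of weight-$1$ edges in $S_i$ equals $\sum_{C_a \in {\cal C}_i} w(\opt_w(H_i^a))$, and the lemma reduces to proving $\sum_a w(\opt_w(H_i^a)) \le |\opt(G) \cap (H_i \setminus R)|$.

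For each slice $H_i^a$, I would upper-bound $w(\opt_w(H_i^a))$ by the weight of a specific feasible 3-ECSS $T_a \subseteq E(H_i^a)$. Let $\opt'$ be the multigraph obtained from $\opt(G)$ by contracting the same components of $G\setminus U$ that are used to build $H_i^a$, and let $T_a$ consist of $\opt'$ with parallel edges capped at the multiplicity of $E(H_i^a)$, together with all edges of $(D_{f(i)} \cup D_{f(i+1)}) \cap E(H_i^a)$. The weight-$1$ edges of $T_a$ are its interior edges (those outside the two double layers); both of their endpoints lie in $U$ at interior levels, so they are unaffected by the contraction or by the cap. Consequently the weight-$1$ part of $T_a$ is exactly $\opt(G) \cap (H_i^a \setminus R)$, giving $w(T_a) = |\opt(G) \cap (H_i^a \setminus R)|$.

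The main obstacle is verifying that $T_a$ is three-edge connected. Iterated application of Lemma~\ref{lem: contraction} shows that $\opt'$ is three-edge connected on $V(H_i^a)$. Let $q_{uv}$ denote the parallel multiplicity in $\opt'$ and $p_{uv}$ the parallel multiplicity in $E(H_i^a)$ at a pair $(u,v)$. If $p_{uv} < 3$, then the contracted graph of $G$ also has multiplicity $p_{uv}$ at $(u,v)$, so $q_{uv} \le p_{uv}$ and capping does nothing; if $p_{uv} = 3$, capping replaces $q_{uv}$ by $\min(q_{uv},3)$. Either way the capped multiplicity equals $\min(q_{uv},3)$. For any cut $(A,B)$ of $V(H_i^a)$ we have $\sum q_{uv} \ge 3$ by three-edge connectivity of $\opt'$, and a two-case split shows $\sum \min(q_{uv},3) \ge 3$: if some $q_{uv} \ge 3$ that term alone contributes $3$, and otherwise every $\min(q_{uv},3)$ equals $q_{uv}$ and the sum is unchanged. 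Thus $T_a$ is three-edge connected, so $w(\opt_w(H_i^a)) \le w(T_a) = |\opt(G) \cap (H_i^a \setminus R)|$.

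Finally I would sum over $a$. Since $H_i^a \cap H_i^b \subseteq E_{f(i)} \subseteq R$ for $a \ne b$, the sets $H_i^a \setminus R$ are pairwise disjoint and contained in $H_i \setminus R$, so $\sum_a |\opt(G) \cap (H_i^a \setminus R)| \le |\opt(G) \cap (H_i \setminus R)|$. Combining this with the weight-$0$ bound from the first paragraph yields $|S_i| \le |\opt(G) \cap (H_i \setminus R)| + |D_{f(i)}| + |D_{f(i+1)}|$, as required.
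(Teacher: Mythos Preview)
Your proof is correct and follows essentially the same route as the paper: both build a feasible 3-ECSS of $H_i^a$ out of $\opt(G)$ by contracting the components of $G\setminus U$, observe that its weight-$1$ part is exactly $\opt(G)\cap(H_i^a\setminus R)$, and then sum over the pairwise-disjoint sets $H_i^a\setminus R$; your explicit ``capping'' argument even spells out a step the paper glosses over. One small technical point: invoking Lemma~\ref{lem: contraction} alone is not quite enough, since a component of $G\setminus U$ need not be connected in $\opt(G)$, so forming $\opt'$ also involves \emph{identifying} vertices rather than just contracting edges---this still preserves $3$-edge-connectivity (any edge cut in the quotient lifts to one in $\opt(G)$), and it is precisely the identification step the paper makes explicit.
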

\ifFull
\begin{proof}
We show that $\opt(G) \cap H_i^a$ is a feasible solution for the minimum-weight 3-ECSS problem on $H_i^a$, and then we bound the size of $S_i$.
Let $Y_i^a$ be the set of vertices of $H_i^a$ that are not contracted nodes.
We first contract connected components of $\opt(G) \setminus Y_i^a$ just as constructing $H_i^a$ from $G$.
Then we need to identify any two contracted nodes, if their corresponding components in $\opt(G)$ are in the same connected component in $G\setminus Y_i^a$. See Figure~\ref{fig: identify}.
\begin{figure}
\centering
\includegraphics[scale = 1]{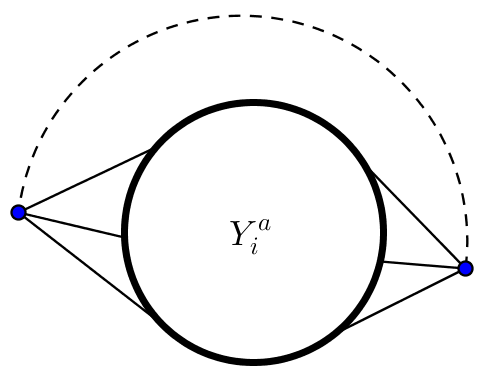}
\caption{The bold cycle encloses $Y_i^a$. The dashed edge is in $G$ but not in $\opt(G)$. Its two endpoints will be identified, since the dashed edge will be contracted to obtain $H_i^a$ but it will not be contracted when contracting connected components of $\opt(G) \setminus Y_i^a$. }
\label{fig: identify}
\end{figure}
Finally, we delete all the self-loops and extra parallel edges if there are more than three parallel edges between any two vertices.
The resulting graph is a subgraph of $\opt(G) \cap H_i^a$ and spans $V(H_i^a)$.
Since identifying two nodes maintains edge-connectivity, and since contractions also maintain edge-connectivity by Lemma~\ref{lem: contraction}, the resulting graph is three-edge connected. 
So $\opt(G) \cap H_i^a$ is a feasible solution for minimum-weighted 3-ECSS problem on $H_i^a$.
Then 
by the optimality of $\opt_w(H_i^a)$, we have $w(\opt_w(H_i^a)) \le w(\opt(G) \cap H_i^a)$.
And by Observation~\ref{obs: weight}, we have 
\begin{align}
|\opt_w(H_i^a) \cap (H_i^a \setminus R)| \le |(\opt(G) \cap H_i^a) \cap (H_i^a \setminus R) | = |\opt(G) \cap (H_i^a \setminus R)|.
\label{equ: EC11}
\end{align}

Note that for any slice $H_i^a$, we have $E(H_i^a) \subseteq E(H_i)$ and $(H_i^a \cap R) \subseteq (H_i \cap R) \subseteq (D_{f(i)} \cup D_{f(i+1)})$.
Since for distinct (vertex-disjoint) maximal circuits $C_a$ and $C_b$ in ${\cal C}_i$, subgraphs $H_i^a \setminus R$ and $H_i^b \setminus R$ are vertex-disjoint,
we have the following equalities. 
\begin{align}
H_i \setminus R = \bigcup\nolimits_{ C_a \in {\cal C}_i } (H_i^a \setminus R) 
\label{equ: EC2}
\end{align}
\begin{align}
S_i \cap (H_i \setminus R) = \bigcup\nolimits_{ C_a \in {\cal C}_i } (\opt_w(H_i^a) \cap (H_i^a \setminus R))
\label{equ: EC3}
\end{align} 
Then 
$$\begin{array}{lll}
 |S_i \cap (H_i \setminus R)| & = \left| \bigcup_{ C_a \in {\cal C}_i } (\opt_w(H_i^a) \cap (H_i^a \setminus R)) \right|  & \mbox{by~(\ref{equ: EC3})} \\
 & \le \sum_{ C_a \in {\cal C}_i } |\opt_w(H_i^a) \cap (H_i^a \setminus R) | &  \\
 & \le \sum_{ C_a \in {\cal C}_i } |\opt(G) \cap (H_i^a \setminus R) |  & \mbox{by~(\ref{equ: EC11})}\\
 & \le |\opt(G) \cap (H_i \setminus R)|. & \mbox{by~(\ref{equ: EC2})}
\end{array}
$$
So we have $\left| S_i \right| = |S_i \cap (H_i \setminus R)| + |S_i \cap (H_i \cap R) | \le |\opt(G) \cap (H_i \setminus R)| + |D_{f(i)}| + |D_{f(i+1)}|$.
\end{proof}
\fi
\begin{lemma}\label{lem: EC2}
The union $\left( \bigcup_{i\ge 0, C_a \in {\cal C}_i } Sol(H_i^a) \right) \cup R$ is a feasible solution for $G$.
\end{lemma}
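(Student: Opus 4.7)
The plan is to verify the two requirements for feasibility: that $S$ spans $V(G)$ and that $S$ is three-edge connected. Spanning is immediate, since every $v\in V(G)$ lies at some level $l$ with $f(i)\le l\le f(i+1)$ for some $i$, and there is a unique maximal circuit $C_a\in{\cal C}_i$ whose enclosed region contains $v$; hence $v$ is a core (non-contracted) vertex of $H_i^a$ and belongs to $V(Sol(H_i^a))\subseteq V(S)$. The substance of the lemma is three-edge connectivity, which I will prove by contradiction: suppose there is a bipartition $(A,B)$ of $V(G)$ with $|\delta_S(A)|\le 2$; since $G$ itself is three-edge connected, $|\delta_G(A)|\ge 3$, so at least one crossing $G$-edge is missing from $S$.

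The first part of the plan is to establish a structural claim: for each slice $H_i^a$ and each contracted node $x\in V(H_i^a)$ with corresponding preimage set $X\subseteq V(G)$, all vertices of $X$ lie on the same side of $(A,B)$. Granting this, the cut induces a well-defined partition $(A_i^a, B_i^a)$ of every $V(H_i^a)$ by placing each contracted node on the side of its preimage. If some slice has both parts nonempty, Lemma~\ref{lem: EC1} together with three-edge connectivity of $Sol(H_i^a)$ produces at least three edges of $Sol(H_i^a)\subseteq S$ crossing $(A_i^a, B_i^a)$, contradicting $|\delta_S(A)|\le 2$. Otherwise every slice lies entirely on one side of $(A,B)$, and then any $G$-edge not in $R$, being contained by (\ref{equ: empty}) in a single slice, has both endpoints on the same side; so every crossing $G$-edge must lie in $R\subseteq S$, giving $|\delta_S(A)|\ge|\delta_G(A)|\ge 3$, again a contradiction.

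The main obstacle is the structural claim — showing that no contracted component is straddled by a hypothetical 2-cut. The intended route is induction on the nesting depth of slices. For an inner contracted node of $H_i^a$, the set $X$ consists of vertices of levels $\ge f(i+1)+1$ enclosed by $C_a$, and these vertices are themselves core (or further-nested) vertices of slices $H_j^b$ with $j>i$; by Lemma~\ref{lem: outer}, the unique outer contracted node's expansion is analogously covered by slices of smaller index. The inductive hypothesis applied to those slices, combined with the fact that each shared double layer $D_{f(j)}$ is fully included in $R\subseteq S$, should supply three edge-disjoint $S$-paths between any two vertices of $X$, ruling out any 2-cut that would split $X$. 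The delicate part is verifying that the ``gluing'' edges of $R$ together with deeper slice solutions genuinely deliver a third edge-disjoint path (and not merely two), which relies on the double-layer structure of $R$ acting like the ``bicycle'' substructure the paper's overview alludes to; here one exploits the fact that $R$ contains both a layer of edges within $V_{f(j)}$ and adjoining edges to $V_{f(j)\pm 1}$, thereby providing the extra redundancy beyond the single cycle used by the TSP framework.
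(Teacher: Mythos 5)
Your global reformulation---assume a cut $(A,B)$ with $|\delta_S(A)|\le 2$, place every contracted node on the side of its preimage, and conclude that either some slice is split (contradicting three-edge-connectivity of $Sol(H_i^a)$) or every crossing edge of $G$ lies in $R\subseteq S$---is sound in outline, although even the second branch needs one more step: equation~(\ref{equ: empty}) only places an edge of $G\setminus R$ in a single $H_i$, and you must still argue that such an edge has both endpoints among the non-contracted vertices of a single slice $H_i^a$. The genuine gap, however, is the structural claim itself. As you acknowledge, it amounts to showing that any two vertices of a contracted component's preimage $X$ are three-edge connected in $S$; but that is precisely the lemma being proved, restricted to such pairs, and your ``induction on nesting depth'' is never formulated as a precise statement. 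Two concrete obstacles: (i) the three edge-disjoint $S$-paths between two vertices of $X$ cannot in general be produced from deeper slices alone---the third path typically must leave $G[X]$ and use the level-$f(i+1)$ cycle of $D_{f(i+1)}\subseteq R$ enclosing $X$, together with structure at level $f(i+1)$ that is shared with the \emph{parent} slice, so an induction that only appeals to slices of larger index is not well-founded as described; moreover, paths of $Sol(H_{i+1}^b)$ may pass through the outer node $r_{i+1}^b$, whose preimage lies outside $X$, so they do not directly yield $S$-paths relevant to $X$. (ii) Your treatment of the outer contracted node is inaccurate: its preimage is not ``covered by slices of smaller index''---it also contains vertices at the same or deeper levels that are not enclosed by $C_a$ (sibling circuits and everything nested inside them), so the claim for outer nodes does not reduce to shallower slices.

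For comparison, the paper never establishes three-edge-connectivity inside a preimage. It inducts over the tree of slices from leaves to root on a different statement: the partial union $S^a$ of the subtree's solutions together with the relevant double layers is feasible for $M_i^a$, the graph in which the inner nodes are uncontracted but the outer node stays contracted. The only fact it needs about a preimage $X$ is Claim~\ref{clm: EC1}, that $(S^b\cup D_{f(i+1)})\cap G[X]$ is \emph{connected} (because $\partial(G[X])\subseteq E_{f(i+1)+1}\subseteq D_{f(i+1)}$), and three-edge-connectivity is then recovered by a case analysis on the two deleted edges relative to the enclosing cycle $C\subseteq D_{f(i+1)}\subseteq S$: if the two deleted edges are not both on $C$, route $u$ and $v$ to $C$ via the slice solution and travel along $C$; if both are on $C$, use the connectivity of the preimage. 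To complete your argument you would either have to prove your structural claim by an induction of exactly this shape---in effect reproducing the paper's proof---or weaken the claim to a connectivity statement like Claim~\ref{clm: EC1} and rework your cut argument around it; as written, the key step is asserted rather than proved.
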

\ifFull
\begin{proof}
\else
\begin{proof}[Proof (sketch)]
\fi
For any $i\ge 0$ and any maximal circuit $C_a \in {\cal C}_i$, let $M_i^a$ be the graph obtained from $H_i^a$ by uncontracting all its inner contracted nodes.
See Figure~\ref{fig: tree}.
By Lemma~\ref{lem: outer}, there is at most one outer node $r_i^a$ for each slice $H_i^a$.

Define a tree $T$ based on all the slices: each slice is a node of $T$, and two nodes $H_i^a$ and $H_{j}^b$ are adjacent if they share any edge and $|i-j| =1$. 
Root $T$ at the slice $H_0^a$, which contains the boundary of $G$. 
Let $T(H_{i}^a)$ be the subtree of $T$ that roots at slice $H_{i}^a$.
See Figure~\ref{fig: tree} as an example.
\ifFull
For each child $H_{i+1}^b$ of $H_i^a$, let $C_b$ be the boundary of $H_{i+1}^b \setminus \{ r_{i+1}^b\}$.
Then $C_b$ is the maximal circuit in ${\cal C}_{i+1}$ that is shared by $H_i^a$ and $H_{i+1}^b$.
Further, by 
\else
By 
\fi 
the construction of $H_i^a$, graph $M_{i+1}^b\setminus \{ r_{i+1}^b\}$ is a subgraph of $M_i^a$.
\begin{figure}
\centering
\includegraphics[scale = 1]{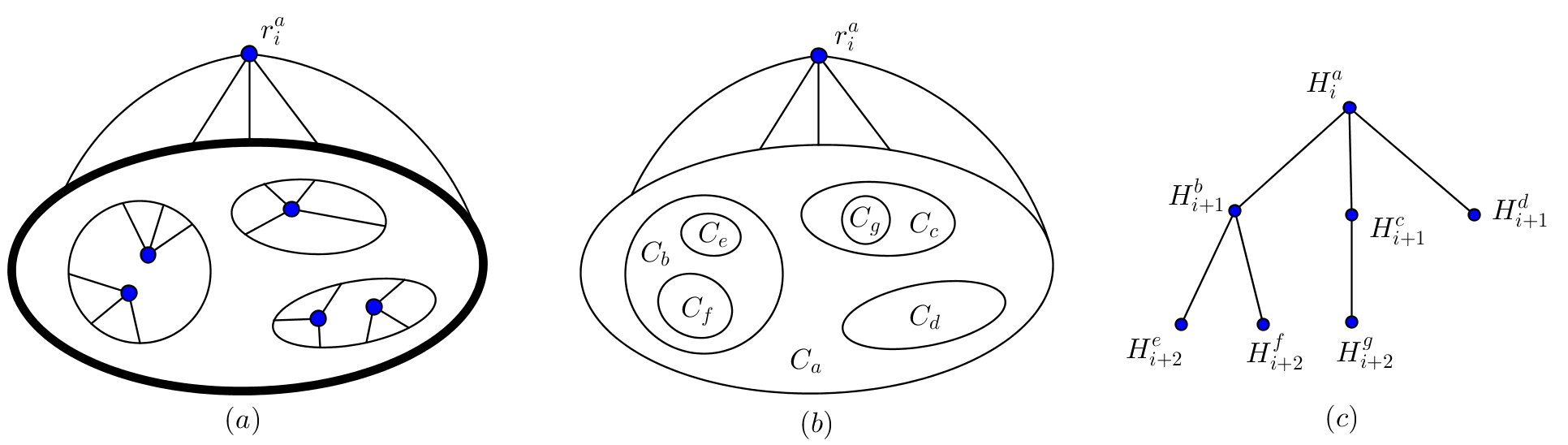}
\caption{(a) A slice $H_i^a$: the bold cycle is a maximal circuit $C_a$ in ${\cal C}_i$ and the nodes represent all the contracted nodes. (b) The graph $M_i^a$ obtained from $H_i^a$ by uncontracting inner nodes of $H_i^a$.
(c) The subtree $T(H_i^a)$.
}
\label{fig: tree}
\end{figure}

We prove the lemma by induction on this tree $T$ from leaves to root.
Assume for each child $H_{i+1}^b$ of $H_i^a$, there is a feasible solution $S^b$ for the graph $M_{i+1}^b$ such that
$S^b = \left( \bigcup_{H \in T(H_{i+1}^b)} Sol(H) \right) \cup  \left( M_{i+1}^b \cap \left(\bigcup_{j \ge i+1} D_{f(j+1)} \right) \right)$.  
We prove that there is a feasible solution $S^a$ for $M_i^a$ such that $S^a = \left( \bigcup_{H \in T(H_{i}^a)} Sol(H) \right) \cup \left( M_i^a \cap \left( \bigcup_{j \ge i} D_{f(j+1)} \right) \right)$.
For the root $H_0^a$ of $T$, we have $M_0^a \cap \left(\bigcup_{j\ge 0} D_{f(j+1)} \right) \subseteq R$, and then the lemma follows from the case $i = 0$.

The base case is that $H_i^a$ is a leaf of $T$.
When $H_i^a$ is a leaf, there is no inner contracted node in $H_i^a$ and we have $M_i^a = H_i^a$. 
So $Sol(H_i^a)$ is a feasible solution for $M_i^a$. 

\ifFull

Recall that $H_i^a$ and $H_{i+1}^b$ only share edges of $(E_{f(i+1)-1, f(i+1)} \cup E_{f(i+1)} \cup E_{f(i+1), f(i+1)+1}) \subseteq D_{f(i+1)}$ and vertices of $V_{f(i+1)}$.
Let $x$ be any inner contracted node of $H_i^a$ and $X$ be the vertex set of the connected component of $G$ corresponding to $x$.
We need the following claim.
\begin{claim}\label{clm: EC1}
If $X \subseteq M_{i+1}^b$ for some $H_{i+1}^b$, then $(S^b \cup D_{f(i+1)}) \cap G[X]$ is connected.
\end{claim}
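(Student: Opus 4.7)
The plan is to split the desired connectivity into two pieces: (1) the ``outer boundary'' of $X$ at level $f(i+1)+1$ is already connected through edges in $D_{f(i+1)}\cap G[X]$; and (2) every vertex of $X$ is connected to this outer boundary via edges of $S^b\cap G[X]$.

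For (1), I would first observe that since $X$ is a connected component of $G\setminus U$ and all its vertices lie at levels $\ge f(i+1)+1$ (because $U$ covers the levels $f(i),\ldots,f(i+1)$ inside $C_a$), $X$ is in fact a connected component of the subgraph $G[\cup_{j\ge f(i+1)+1} V_j]$. By the level definition, the vertices of $X\cap V_{f(i+1)+1}$ are exactly the vertices of $X$ on the outer (unbounded) face of that subgraph. Since $X$ is a connected planar graph, its outer face is bounded by a single closed walk $W$, which visits every vertex of $X\cap V_{f(i+1)+1}$; moreover, every edge traversed by $W$ joins two vertices at level $f(i+1)+1$, so $W\subseteq E_{f(i+1)+1}\cap G[X]\subseteq D_{f(i+1)}\cap G[X]$. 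As a closed walk, $W$ is connected and spans $X\cap V_{f(i+1)+1}$.

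For (2), I would exploit that $S^b$ is $3$-edge-connected on $M_{i+1}^b$ and hence connected. Fix any $v\in X$ and take a path $P$ in $S^b$ from $v$ to the outer node $r_{i+1}^b\in V(M_{i+1}^b)\setminus X$. Because $X$ is a connected component of $G\setminus U$, no edge of $G$ incident to $X$ leaves $X\cup U$; moreover, adjacent vertices in Baker's decomposition differ by at most one in level, so any $X$-to-$U$ edge has its $X$-endpoint at level $f(i+1)+1$. It follows that every edge of $M_{i+1}^b$ leaving $X$ has its $X$-endpoint in $X\cap V_{f(i+1)+1}$. Hence the first edge along $P$ that leaves $X$ is incident to some $u\in X\cap V_{f(i+1)+1}$, and the prefix of $P$ from $v$ to $u$ lies inside $S^b\cap G[X]$.

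Combining (1) and (2), every vertex of $X$ reaches $W\subseteq D_{f(i+1)}\cap G[X]$ through edges of $S^b\cap G[X]$, and $W$ spans $X\cap V_{f(i+1)+1}$, so $(S^b\cup D_{f(i+1)})\cap G[X]$ is connected. The main obstacle will be the careful boundary analysis of $M_{i+1}^b$: besides edges to uncontracted vertices at level $f(i+1)$ on $C_b$, I must rule out edges going directly from $X$ to the outer contracted node $r_{i+1}^b$. This is handled via planarity, since any such edge would correspond to a $G$-edge crossing the enclosing cycle $C_b$, which is impossible in a planar embedding. Once this boundary property is pinned down, the two pieces glue together into a short proof.
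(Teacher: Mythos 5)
Your proposal is correct and takes essentially the same route as the paper's proof: the paper likewise argues that the outer boundary of $G[X]$ consists of level-$(f(i+1)+1)$ vertices, so its edges lie in $E_{f(i+1)+1}\subseteq D_{f(i+1)}$ and form a connected closed walk, and then connects every vertex of $X$ to that boundary inside $G[X]$ via a path in $S^b$ to a vertex outside $X$. The only difference is cosmetic: the paper routes the path to a level-$f(i+1)$ vertex and invokes a planarity crossing of $\partial(G[X])$, whereas you route to $r_{i+1}^b$ and analyze the first edge leaving $X$ (ruling out direct $X$-to-$r_{i+1}^b$ edges by planarity with respect to $C_b$), which is a valid, slightly more explicit version of the same step.
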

\begin{proof}
By the construction of levels, all the vertices on the boundary of $G[X]$ have level $f(i+1)+1$.
See Figure~\ref{fig: claim}.
Then all the edges of $\partial (G[X])$ are in $E_{f(i+1)+1} \subseteq D_{f(i+1)}$.
So subgraph $(S^b \cup D_{f(i+1)}) \cap \partial(G[X])$ is connected.
Let $u$ be any vertex in $X$ and let $v$ be any vertex in $M_{i+1}^b$ that has level $f(i+1)$.
Then $v$ is not in $X$.
Since $S^b$ is a feasible solution for $M_{i+1}^b$, there exists a path from $u$ to $v$ in $S^b$. 
This path must intersect $\partial (G[X])$ by planarity.
So $u$ and any vertex on the boundary of $G[X]$ are connected in $(S^b \cup D_{f(i+1)}) \cap G[X]$, giving the claim.  
\end{proof}
\begin{figure}
\centering
\includegraphics[scale = 1]{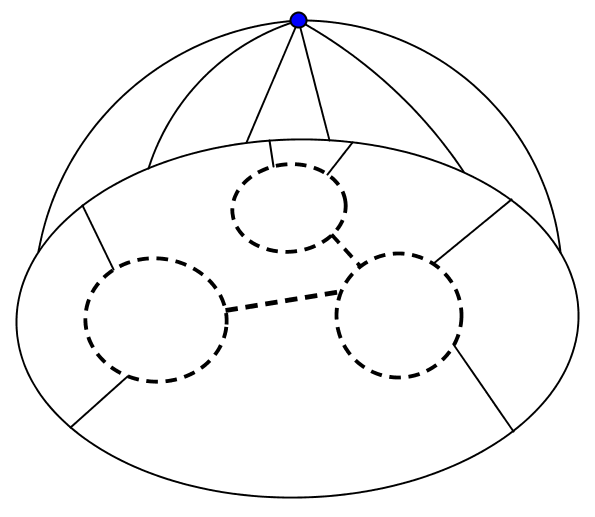}
\caption{
The dashed subgraph is the boundary of $G[X]$. All the vertices in the dashed subgraph are in level $f(i+1)+1$, and all its edges are in $E_{f(i+1)+1}$.
}
\label{fig: claim}
\end{figure}

Let $u$ and $v$ be any two vertices of $M_i^a$. 
To prove the feasibility of $S^a$, we prove $u$ and $v$ are three-edge connected in $S^a$.
Let $M = \left( \bigcup_{H_{i+1}^b \mbox{ is a child of } H_i^a } V(M_{i+1}^b \setminus \{ r_{i+1}^b \}) \right)$ and $Y_i^a = V(H_i^a) \setminus \{ \mbox{inner contracted nodes of } H_i^a \}$.
Then $V(M_i^a) = Y_i^a \cup M$.
Depending on the locations of $u$ and $v$, we have three cases.
\begin{description}
\item[Case 1: $\mathbf{u,v \in Y_i^a}$.]
Note that we could construct $S^a$ in the following way.
Initially we have $S^* = Sol(H_i^a) \cup (D_{f(i+1)} \cap H_i^a)$.
For any inner contracted node $x$ of $H_i^a$, let $X$ be the vertex set of its corresponding connected component in $G$.
Then there exists a child $H_{i+1}^b$ of $H_i^a$ such that $X \subseteq V(M_{i+1}^b)$, and we replace $x$ with $(S^b \cup D_{f(i+1)}) \cap G[X]$ in $S^*$.
We do this for all inner contracted nodes of $H_i^a$.
Finally we add some edges of $D_{f(i+1)}$ into the resulting graph such that $D_{f(i+1)} \subseteq S^*$.
Then the resulting $S^*$ is the same as $S^a$ by the definition of $S^a$.
We prove that any pair of the remaining vertices in $V(H_i^a)$ are three-edge connected during the construction.
This includes the remaining inner contracted nodes of $H_i^a$ during the process, but after all the replacements, there is no such inner contracted nodes, proving the case.

By the definition of $Sol(H_i^a)$, any pair of vertices of $H_i^a$ are three-edge connected in $Sol(H_i^a)$.
Assume after the first $k$ replacements, any pair of the remaining vertices in $V(H_i^a)$ are three-edge connected in the resulting graph $S^*$. 
Let $x$ be the next inner contracted node to be replaced, $X$ be the vertex set of its corresponding component and $S'$ be the resulting graph after replacing $x$.
Let $H_{i+1}^b$ be the child of $H_i^a$ such that $X \subseteq V(M_{i+1}^b)$.
Let $C$ be the simple cycle in $\partial (M_{i+1}^b \setminus \{ r_{i+1}^b \} )$ that encloses $X$.
Then all vertices of $C$ have level $f(i+1)$ and are shared by $H_i^a$ and $H_{i+1}^b$.
Further, $C \subseteq H_i^a \cap D_{f(i+1)} \subseteq S'$.
Let $u$ and $v$ be any two remaining vertices of $V(H_i^a)$.
There are three edge-disjoint $u$-to-$x$ paths and three edge-disjoint $v$-to-$x$ paths in $S^*$, all of which must intersect $C$.
So there exist three edge-disjoint $u$-to-$X$ paths and three edge-disjoint $v$-to-$X$ paths in $S'$.
Now we delete two edges in $S'$.
If these two edges are not both in $C$, then the vertices of $C$ are still connected.
Then one remaining $u$-to-$C$ path and one remaining $v$-to-$C$ path together with the rest of $C$ witness the connectivity between $u$ and $v$.
If the two deleted edges are both in $C$, then there exist one $u$-to-$X$ path and one $v$-to-$X$ path after the deletion.
By Claim~\ref{clm: EC1}, subgraph $(S^b \cup D_{f(i+1)}) \cap G[X]$ is connected. So all vertices of $X$ are connected in $S'$.
Then $u$ and $v$ are connected after the deletion.
Finally, after replacing all the inner contracted nodes, we only add edges of $D_{f(i+1)}$ into $S^*$, which will not break three-edge-connectivity between any pair of vertices.
This finishes the proof of Case 1.

\item[Case 2: $\mathbf{u, v \in M}$.] 
Let $M_{i+1}^{b_1}$ be the graph contains $u$ and $M_{i+1}^{b_2}$ be the graph contains $v$. (The two graphs could be identical.) 
Let $C_u$ (resp. $C_v$) be the simple cycle in $\partial (M_{i+1}^{b_1} \setminus \{ r_{i+1}^{b_1} \})$ (resp. $\partial (M_{i+1}^{b_2} \setminus \{ r_{i+1}^{b_2} \})$) that enclose $u$ (resp. $v$). (The two cycles $C_u$ and $C_v$ could be identical.)
Since $S^{b_1}$ is three-edge connected, there are three edge-disjoint paths from $u$ to some vertex of $C_u$ in $S^{b_1}$.
All these three paths must intersect $C_u$, so there are three edge-disjoint paths from $u$ to $C_u$ in $(S^{b_1} \setminus \{ r_{i+1}^{b_1} \}) \subseteq S^a$.
Similarly, there are three edge-disjoint paths from $v$ to $C_v$ in $(S^{b_2} \setminus \{ r_{i+1}^{b_2} \}) \subseteq S^a$.
Now we delete any two edges in $S^a$.
After the deletion, there exist one $u$-to-$w_1$ path and one $v$-to-$w_2$ path where $w_1 \in C_u$ and $w_2 \in C_v$.
Since all vertices in $V (C_u \cup C_v)$ have level $f(i+1)$ and are in $Y_i^a$, they are three-edge connected in $S^a$ by Case 1.
This means there exists a path from $w_1$ to $w_2$ after the deletion.
Therefore, $u$ and $v$ are connected after deleting any two edges in $S^a$, giving the three-edge-connectivity.

\item[Case 3: $\mathbf{u \in Y_i^a}$ {\bf and} $\mathbf{v \in M}$.]
Let $M_{i+1}^b$ be the graph containing $v$.
Then there is a vertex $w$ in $Y_i^a \cap (M_{i+1}^b \setminus \{ r_{i+1}^b \})$.
By Case 1, vertices $u$ and $w$ are three-edge connected, and by Case 2, vertices $v$ and $w$ are three-edge connected.
Then vertices $u$ and $v$ are three-edge connected by the transitivity of three-edge-connectivity.
\end{description}
This completes the proof of Lemma~\ref{lem: EC2}.
\else

To prove the feasibility of $S^a$, we show any two vertices in $S^a$ are three-edge connected in $S^a$.
This completes the proof of Lemma~\ref{lem: EC2}.
\fi
\end{proof}

\begin{proof}[Proof of Theorem~\ref{thm: EC}]
We first prove correctness of our algorithm, and then prove its running time.
By Lemma~\ref{lem: EC2}, $S = \left( \bigcup_{i\ge 0, C_a \in {\cal C}_i} \opt_w(H_i^a) \right) \cup R$ is a feasible solution. 
Thus
$$
\begin{array}{llll}
|S| 
 & \le \left|\left( \bigcup_{i\ge 0, C_a \in {\cal C}_i } \opt_w(H_i^a) \right) \right| + |R|  & \mbox{} \\
 & \le \sum_{i\ge 0} \left| \bigcup_{ C_a \in {\cal C}_i} \opt_w(H_i^a)   \right| +  |R|  & \mbox{} \\
 & \le \sum_{i\ge 0}\left(\left| \opt(G)\cap (H_i\setminus R) \right| +|D_{f(i)}| +|D_{f(i+1)}|  \right) + |R|  & \mbox{by Lemma~\ref{lem: EC3}} \\
 & \le \sum_{i\ge 0}\left| \opt(G)\cap (H_i\setminus R) \right| +|R| +|R| + |R|  & \mbox{} \\
 & \le |\opt(G)| + 3|R|  & \mbox{by~(\ref{equ: empty})} \\
 & \le |\opt(G)| + 36/k\cdot |\opt(G)|  & \mbox{by~(\ref{equ: EC1})} \\
 & \le (1+36/k)|\opt(G)|  & \mbox{} \\
\end{array}
$$
Let $k = 36/\epsilon$, and then we obtain $|S| \le (1+\epsilon) |\opt(G)|$.

Let $n = |V(G)|$ be the number of vertices of graph $G$.
We could find $R$ and construct all slices in $O(n)$ time by Lemma~\ref{lem: 3EC_slice}.  
By Lemma~\ref{lem: bw}, each slice has branchwidth $O(k)$.
So by Theorem~\ref{thm: dp}, we could solve the minimum-weight 3-ECSS on each slice in linear time for fixed $k$.
Based on those optimal solutions for all slices, we could construct our solution in $O(n)$ time.
Therefore, our algorithm runs in $O(n)$ time.
\end{proof}

\section{PTAS for $3$-VCSS}\label{sec: VC}
In this section, we prove Theorem~\ref{thm: VC}.
W.l.o.g. assume $G$ is simple.
Then $G$ is our spanner.
Let $\opt(G)$ be an optimal solution for $G$. 
Since $G$ is simple and planar, we have $|G| \le 3 |V(G)|$.  
Then by (\ref{equ: opt1}) we have $|G| \le 2|\opt(G)|$.
In this section, we only consider 3VC slices. So in the following, we simplify 3VC slice to slice.
We first construct slices from $G$.
By~(\ref{equ: R}), we have the following 
\begin{align}
|R| \le 2/k \cdot |G| \le 4/k \cdot |\opt(G)|. \label{equ: VC1}
\end{align}

Similar to 3-ECSS, we want to solve a minimum-weight 3-VCSS problem on each slice. But before defining the weights for this problem on each slice, we first need to show any slice is triconnected.
The following lemma is proved by Vo~\cite{Vo83}, we provide a proof for completeness. 
\begin{lemma}{\rm (\cite{Vo83})} \label{lem: component}
Let $C$ be a simple cycle of $G$ that separates $G\setminus C$ into two parts: $A$ and $B$. Let $H$ be any connected component of $A$.
If $G$ is triconnected, then $G/H$, the graph obtained from $G$ by contracting $H$, is triconnected.
\end{lemma}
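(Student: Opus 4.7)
I would argue by contradiction: suppose $G/H$ has a $2$-separator $\{u,v\}$, and derive a separator of $G$ of size at most $2$, contradicting triconnectivity of $G$. Let $h$ denote the contracted node standing for $H$. Two structural facts drive the whole argument. First, because $H$ is a connected component of $A$ and $C$ separates $A$ from $B$ in the plane, every neighbor of $H$ in $G$ lies on $C$, so $N_{G/H}(h)\subseteq V(C)$ and $V(H)\cap V(C)=\emptyset$; in particular $C$ persists as a cycle of $G/H$ that avoids $h$. Second, triconnectivity of $G$ forces $|N_{G/H}(h)|\ge 3$, because otherwise the (at most two) vertices of $V(C)$ adjacent to $H$ would themselves separate $H$ from $B$ in $G$.

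\textbf{Case 1: $h\notin\{u,v\}$.} I would simply lift the separator. The components of $(G/H)-u-v$ either avoid $h$, in which case they correspond to the very same vertex set in $G-u-v$, or contain $h$, in which case they correspond to that same set with $\{h\}$ replaced by $V(H)$. Since edges of $G/H$ correspond bijectively to edges of $G$ (an edge $hw$ in $G/H$ is an edge from some vertex of $H$ to $w$ in $G$, and $u,v\notin V(H)$), no edge can cross this partition in $G$ either. Both sides of the partition remain nonempty, since the $h$-side contains $V(H)\neq\emptyset$, so $\{u,v\}$ is a $2$-separator of $G$, a contradiction.

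\textbf{Case 2: $v=h$.} This is where the cycle $C$ must be used as a highway. Let $P,Q$ be two nonempty sides of $(G/H)-u-h$. Since $V(H)\cap V(C)=\emptyset$ and $u\in V(G/H)\setminus\{h\}$, the graph $C-u$ is a path (or all of $C$) and is entirely contained in $(G/H)-u-h$; hence $V(C)\setminus\{u\}$ lies in one side, say $P$. Because $|N_{G/H}(h)|\ge 3$ and $N_{G/H}(h)\subseteq V(C)$, at least two neighbors of $h$ lie in $V(C)\setminus\{u\}\subseteq P$, and $Q$ contains none. Consequently in $(G/H)-u$ the vertex $h$ is joined only to $P$, so $(G/H)-u$ is already disconnected. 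Lifting this to $G$ exactly as in Case 1 shows that $G-u$ is disconnected, i.e.\ $\{u\}$ is a $1$-separator of $G$, again contradicting triconnectivity.

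The only delicate point is Case 2: a naive lift of $\{u,h\}$ only yields the large separator $\{u\}\cup V(H)$ in $G$, which is not directly useful when $|V(H)|\ge 2$. The key observation that resolves it is that the cycle $C$, minus $u$, keeps $V(C)\setminus\{u\}$ together in $(G/H)-u-h$, forcing all of $h$'s neighbors onto one side of the separator and thereby revealing that $\{u\}$ alone already separates $G/H$, and hence $G$.
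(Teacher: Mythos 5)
Your proof is correct, but it is organized quite differently from the paper's. The paper argues directly: it first asserts that the contracted node $x$ remains triconnected to every other vertex of $G/H$ (this is exactly the content your Case~1 establishes by lifting a cut, but the paper states it without proof), and then, for two ordinary vertices $u,v$, it takes three vertex-disjoint $u$-to-$x$ and $v$-to-$x$ paths, observes that all of them must cross $C$ because $V(C)$ separates $x$ from the rest, and does a case analysis on the two deleted vertices: if both lie on $C$ it routes $u$ and $v$ through $x$, otherwise it routes them along the connected remainder of $C$. You instead argue by contraposition, lifting a hypothetical $2$-separator $\{u,v\}$ of $G/H$ back to $G$; when the contracted node $h$ is in the separator you use the facts $N_{G/H}(h)\subseteq V(C)$ and the connectivity of $C-u$ to show that $u$ alone already disconnects $G/H$, hence $G$. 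Both arguments pivot on the same structural facts (all attachments of $H$ lie on $C$, and $C$ minus one vertex stays connected), but yours avoids the Menger-style path bookkeeping and, usefully, makes explicit the step the paper leaves implicit, while the paper's constructive version has the advantage that it transfers verbatim to the identification statement (Lemma~\ref{lem: identification}) and matches the path-crossing-a-cycle style used in the paper's other connectivity proofs (e.g.\ Lemmas~\ref{lem: EC2} and~\ref{lem: VC2}). The only cosmetic slips in your write-up are immaterial: the separation of $A$ from $B$ by $C$ is graph-theoretic rather than planar, and the bound $|N_{G/H}(h)|\ge 3$ is not actually needed in Case~2, since $N_{G/H}(h)\subseteq V(C)\subseteq P\cup\{u\}$ already forbids edges from $h$ to $Q$.
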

\begin{proof}
Let $x$ be the contracted node of $G/H$.
Then $x$ and any other vertex of $G/H$ are triconnected since $G$ is triconnected. 
Let $u$ and $v$ be any two vertices of $G/H$ distinct from $x$. 
To prove the lemma, we show $u$ and $v$ are triconnected.
Since $x$ and $u$ are triconnected, there are three vertex-disjoint paths between $u$ and $x$.
Note that all the three paths must intersect cycle $C$ since $V(C)$ form a cut for $x$ and all the other vertices in $G/H$.
Similarly, there are three vertex-disjoint paths between $v$ and $x$, all of which intersect cycle $C$.
Now we delete any two vertices different from $u$ and $v$ in $G/H$.
If the two deleted vertices are both in $C$, then there 
exist one $u$-to-$x$ path and one $v$-to-$x$ path after the deletion, which witness the connectivity between $u$ and $v$.
If the two deleted vertices are not both in $C$, the remaining vertices in $C$ are connected and then the remaining $u$-to-$C$ path and the remaining $v$-to-$C$ path together with the rest of edges in $C$ witness the connectivity between $u$ and $v$.
So $u$ and $v$ are triconnected.
\end{proof}
\ifFull
By the same proof, we can obtain the following lemma.
\begin{lemma}\label{lem: identification}
Let $C$ be a simple cycle of $G$.
Let $u$ and $v$ be two vertices of $G\setminus C$ whose neighbors in $G$ are all in $C$.
Then if $G$ is triconnected, the graph obtained from $G$ by identifying $u$ and $v$ is triconnected.
\end{lemma}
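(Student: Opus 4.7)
The plan is to mirror the proof of Lemma~\ref{lem: component}, adapting it so that the identification of two vertices $u$ and $v$ replaces the contraction of a connected component. Let $z$ denote the identified vertex of the graph $G'$. I will verify that $G'$ is triconnected by showing, for any pair of distinct vertices $\{p,q\}$ of $G'$, that $G'\setminus\{p,q\}$ remains connected, which by Menger's theorem is equivalent to triconnectivity.

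First I will treat the easy case $z \notin \{p,q\}$, where $p,q \in V(G)\setminus\{u,v\}$. Since $G$ is triconnected, $G\setminus\{p,q\}$ is connected, and any $a$-$b$ walk in $G\setminus\{p,q\}$ gives an $a$-$b$ walk in $G'\setminus\{p,q\}$: a walk that visits both $u$ and $v$ can be short-circuited at the single image vertex $z$ to yield a genuine path.

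The harder case is $z \in \{p,q\}$, say $p=z$, which reduces to connectivity of $G\setminus\{u,v,q\}$. For any two vertices $a,b \in V(G)\setminus\{u,v,q\}$, triconnectivity of $G$ produces three internally vertex-disjoint $a$-$b$ paths $P_1, P_2, P_3$. If some $P_i$ avoids $\{u,v,q\}$ entirely, we are done; otherwise, by pigeonhole and internal disjointness, exactly one of the three paths passes through each of $u$, $v$, and $q$. In that situation the path through $u$ has the form $a\to\cdots\to c_1\to u\to c_2\to\cdots\to b$ and the one through $v$ has the form $a\to\cdots\to d_1\to v\to d_2\to\cdots\to b$, where $c_1,c_2,d_1,d_2\in V(C)$ by the neighborhood hypothesis. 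Since $C$ is a simple cycle and at most one of its vertices is deleted, $C\setminus\{q\}$ is a path connecting any two of its surviving vertices; concatenating the prefix of the $u$-path up to $c_1$, a sub-walk of $C\setminus\{q\}$ from $c_1$ to $d_2$, and the suffix of the $v$-path from $d_2$ yields an $a$-$b$ walk, and hence an $a$-$b$ path, in $G\setminus\{u,v,q\}$.

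The main obstacle is the splicing step in the hard case, which critically uses both structural hypotheses: that $C$ is a simple cycle (so that removing at most one vertex leaves a connected path), and that $u$ and $v$ have all their neighbors in $C$ (so the entry and exit vertices $c_1$ and $d_2$ of the $u$- and $v$-paths lie in $C$ and hence can be bridged through $C\setminus\{q\}$). Every remaining portion of the argument follows the pattern of Lemma~\ref{lem: component} with essentially no modification.
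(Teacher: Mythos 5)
Your proof is correct, but it is organized differently from the paper's. The paper proves this lemma by literally reusing the proof of Lemma~\ref{lem: component}: for two vertices $a,b$ of the new graph distinct from the identified vertex $z$, it takes three vertex-disjoint $a$-to-$z$ paths and three vertex-disjoint $b$-to-$z$ paths (which all must cross $C$, since every neighbor of $z$ lies on $C$), deletes two arbitrary vertices, and splits into the cases ``both deleted vertices on $C$'' (route $a$ and $b$ through $z$) and ``not both on $C$'' (route the surviving $a$-to-$C$ and $b$-to-$C$ paths through the connected remainder of $C$). You instead split on whether $z$ itself is among the two deleted vertices: the case $z$ not deleted you dispose of by projecting the connectivity of $G\setminus\{p,q\}$ through the quotient map, and the case $z$ deleted you reduce to connectivity of $G\setminus\{u,v,q\}$, where you apply Menger in $G$ to get three internally disjoint $a$-to-$b$ paths, use pigeonhole to conclude each of $u,v,q$ lies on a distinct path, and splice the $u$-path's prefix and the $v$-path's suffix through $C\setminus\{q\}$. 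Both arguments rest on the same two structural facts (all neighbors of $u$ and $v$ lie on $C$, and a simple cycle minus at most one vertex stays connected), but yours works with $a$-to-$b$ path systems in the original graph $G$ and avoids having to first argue that $z$ is triconnected to every other vertex of the identified graph, while the paper's buys uniformity: one case analysis, verbatim reusable from Lemma~\ref{lem: component}, with no pigeonhole step. Your pigeonhole step is sound (each of $u,v,q$ can be internal to at most one of the three paths, so either some path avoids all three or the correspondence is a bijection), so there is no gap.
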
 
\fi
Let ${\cal C}_i$ be the set of all simple cycles in $\partial (G[V_{f(i)}])$. Then we have the following lemma.
\begin{lemma}\label{lem: VC1}
For any $i\ge 0$ and any simple cycle $C_a \in {\cal C}_i$, the slice $H_i^a$ is triconnected.
\end{lemma}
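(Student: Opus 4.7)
The plan is to deduce the triconnectivity of $H_i^a$ from that of $G$ by repeatedly applying Lemma~\ref{lem: component}. Recall that $H_i^a$ is produced from $G$ in three stages: (i) contracting each connected component of $G\setminus U$ into a single node, (ii) deleting self-loops, and (iii) deleting extra parallel edges to make the result a simple graph. Self-loops do not affect vertex-connectivity, and deleting a parallel edge leaves every vertex cut unchanged so long as at least one copy remains; so stages (ii) and (iii) preserve triconnectivity, and the burden of the argument falls on stage (i).

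For stage (i), I would exhibit, for each connected component $X$ of $G\setminus U$, a simple cycle of $G$ separating $V(X)$ from the rest of $V(G)$, and then invoke Lemma~\ref{lem: component} to contract $X$ while preserving triconnectivity. Two kinds of components occur. By Lemma~\ref{lem: outer}, there is at most one \emph{outer} component, consisting of vertices lying outside $C_a$; the cycle $C_a$ itself separates this component from all vertices enclosed by $C_a$. Each \emph{inner} component $X$ lies strictly inside $C_a$ and contains only vertices at levels $\geq f(i+1)+1$; I would argue that the boundary of the biconnected component of $G[\cup_{j\geq f(i+1)} V_j]$ containing $X$ together with its neighbors on level $f(i+1)$ is a simple cycle $C_X$ in $\partial(G[V_{f(i+1)}])$, using Observation~\ref{obs: cycle} applied at level $f(i+1)$, and that this cycle $C_X$ separates $X$ from the rest of $G$.

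I would then contract the components one at a time, handling all inner components first and the outer component last. Distinct inner components lie in disjoint planar regions bounded by disjoint simple cycles in $V_{f(i+1)}$, so the separating cycle of a yet-uncontracted inner component survives earlier contractions, and Lemma~\ref{lem: component} can be invoked again; likewise $C_a$ remains a simple cycle in the partially contracted graph throughout the process. After finishing stage (i) and performing the cleanup in (ii) and (iii), the graph we obtain is precisely $H_i^a$, and by the accumulated preservation of triconnectivity it is triconnected.

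The hard part will be rigorously establishing that each inner component is bounded by a simple cycle $C_X$ in $V_{f(i+1)}$, particularly for \emph{trivial} inner components such as an isolated vertex of $G\setminus U$ at level $f(i+1)+1$ all of whose neighbors lie in $V_{f(i+1)}\cap U$. For those degenerate cases Observation~\ref{obs: cycle} cannot be invoked as stated, and I would instead appeal to planarity plus the three-connectivity of $G$: in a three-connected planar graph, a vertex whose neighbors are entirely confined to a small set of faces has its neighborhood traced out by a closed walk in $V_{f(i+1)}$ that witnesses a simple separating cycle, which can then play the role of $C_X$.
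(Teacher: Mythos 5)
Your proposal is correct and follows essentially the same route as the paper's proof: both obtain $H_i^a$ by contracting the components of $G\setminus U$ one at a time and invoke Lemma~\ref{lem: component}, using $C_a$ as the separating cycle for the outer component and a simple cycle in $\partial(G[V_{f(i+1)}])$ (an element of ${\cal C}_{i+1}$) for each inner component. The additional care you take---checking that the separating cycles consist of never-contracted vertices and hence survive earlier contractions, and that degenerate inner components still admit a separating cycle---merely fills in details the paper leaves implicit.
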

\begin{proof} 
Let $Y_i^a$ be the set of vertices of $H_i^a$ that are not contracted nodes.
We could obtain $H_i^a$ by contracting each connected component of $G\setminus Y_i^a$ into a node.
Each time we contract a connected component $H$ of $G\setminus Y_i^a$, 
there is a simple cycle $C$ that separates $H$ and other vertices: if $H$ is outside of $C_a$, then $C = C_a$; otherwise $C$ is some simple cycle in ${\cal C}_{i+1}$ that encloses $H$.
Then by Lemma~\ref{lem: component} the resulting graph is still triconnected after each contraction.
Therefore, the final resulting graph $H_i^a$ is triconnected.
\end{proof}

Now we define the edge-weight function $w$ on a slice $H_i^a$:
we assign weight 0 to edges in $H_i^a \cap (D_{f(i)} \cup D_{f(i+1)})$ and weight 1 to other edges.
Then we solve the minimum-weight 3-VCSS problem on slice $H_i^a$ by Theorem~\ref{thm: dp}.
Let $Sol(H_i^a)$ be a feasible solution for the minimum-weight 3-VCSS problem on $H_i^a$.
Then it is also a feasible solution for 3-ECSS on $H_i^a$. 
Let $\opt_w(H_i^a)$ be an optimal solution for this problem on $H_i^a$.
Then we can prove the following two lemmas, whose proofs follow the same outlines of the proofs of Lemmas~\ref{lem: EC3} and~\ref{lem: EC2} respectively.
\begin{lemma}\label{lem: VC3}
For any $i \ge 0$, let $S_i = \bigcup_{C_a \in {\cal C}_i} \opt_w(H_i^a)$.
Then we can bound the number of edges in $S_i$ by the following inequality
$$\left| 
S_i
\right| \le |\opt(G) \cap (H_i \setminus R)| + |D_{f(i)}| + |D_{f(i+1)}|.$$
\end{lemma}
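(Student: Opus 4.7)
The plan is to mirror the proof of Lemma~\ref{lem: EC3}, replacing the edge-connectivity preservation argument (which used Lemma~\ref{lem: contraction}) with the vertex-connectivity preservation statements of Lemma~\ref{lem: component} and Lemma~\ref{lem: identification}. The core observation we need is that for each slice $H_i^a$, the subgraph $\opt(G)\cap H_i^a$, after performing the same contractions and identifications used in building $H_i^a$ from $G$, is a feasible solution for the minimum-weight 3-VCSS problem on $H_i^a$; optimality of $\opt_w(H_i^a)$ then gives a per-slice weight bound that sums cleanly.

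First I would set up the notation: let $Y_i^a$ be the non-contracted vertices of $H_i^a$, and build a candidate solution $F_i^a$ for $H_i^a$ by (i) contracting each connected component of $\opt(G)\setminus Y_i^a$, (ii) identifying any two resulting nodes whose pre-images lie in the same connected component of $G\setminus Y_i^a$, and (iii) discarding self-loops. By construction $F_i^a$ is a spanning subgraph of $\opt(G)\cap H_i^a$ on vertex set $V(H_i^a)$. To certify triconnectivity of $F_i^a$, note that the separating cycle used at every contraction step is either $C_a$ or a simple cycle in ${\cal C}_{i+1}$ enclosing the component being contracted, so Lemma~\ref{lem: component} applies at each contraction; Lemma~\ref{lem: identification} handles each identification step, since the identified nodes' neighbors in $G$ all lie on the corresponding enclosing cycle. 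Hence $F_i^a$ is triconnected and thus feasible for the minimum-weight 3-VCSS problem on $H_i^a$.

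By the optimality of $\opt_w(H_i^a)$ and the fact that our weight function assigns weight $0$ on $H_i^a \cap(D_{f(i)}\cup D_{f(i+1)})$ and weight $1$ elsewhere, the analog of Observation~\ref{obs: weight} yields
\[
|\opt_w(H_i^a)\cap(H_i^a\setminus R)|\;\le\;|F_i^a\cap(H_i^a\setminus R)|\;\le\;|\opt(G)\cap(H_i^a\setminus R)|,
\]
since $H_i^a\cap R\subseteq D_{f(i)}\cup D_{f(i+1)}$. Summing over $C_a\in{\cal C}_i$ and using that distinct 3VC slices $H_i^a$, $H_i^b$ only share edges in $E_{f(i)}\subseteq R$ (so the sets $H_i^a\setminus R$ are pairwise vertex-disjoint in the relevant edge sense, as in (\ref{equ: EC2}) and (\ref{equ: EC3})), we obtain
\[
|S_i\cap(H_i\setminus R)|\;\le\;|\opt(G)\cap(H_i\setminus R)|.
\]
Finally, $|S_i\cap R|\le|H_i\cap R|\le|D_{f(i)}|+|D_{f(i+1)}|$, and adding the two bounds gives the lemma.

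The main obstacle I anticipate is verifying that the identification step really is needed and that Lemma~\ref{lem: identification} applies cleanly: two inner nodes of $H_i^a$ may correspond to distinct components of $\opt(G)\setminus Y_i^a$ that sit inside a single component of $G\setminus Y_i^a$, so when we contract according to $\opt(G)$ we produce two nodes where $H_i^a$ has only one, and we must merge them. Checking that every such merge is between vertices whose neighborhoods all lie on the corresponding enclosing simple cycle (in $\partial(G[V_{f(i)}])$ or $\partial(G[V_{f(i+1)}])$) is the delicate step, but it follows from the level-based construction of slices, since a vertex strictly inside such a cycle can only leave it through the cycle itself.
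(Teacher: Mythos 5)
Your proposal matches the paper's own proof of Lemma~\ref{lem: VC3} essentially step for step: you show $\opt(G)\cap H_i^a$ (via the contraction/identification construction certified by Lemma~\ref{lem: component} and Lemma~\ref{lem: identification}) is feasible for the minimum-weight 3-VCSS problem on each slice, invoke optimality of $\opt_w(H_i^a)$ together with the zero-weight edges lying in $D_{f(i)}\cup D_{f(i+1)}$, and sum over the vertex-disjoint subgraphs $H_i^a\setminus R$ exactly as in the paper's inequalities. The argument is correct and takes the same route, including the delicate point about identifications being between nodes whose neighbors all lie on a single enclosing cycle.
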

\ifFull
\begin{proof}
We first show $\opt(G) \cap H_i^a$ is a feasible solution for minimum-weighted 3-VCSS problem on any slice $H_i^a$.
Let $Y_i^a$ be the set of vertices of $H_i^a$ that are not contracted nodes.
We first contract each component of $\opt(G)\setminus Y_i^a$ into a node.
The resulting graph after each contraction is still triconnected by Lemma~\ref{lem: component}.
After all the contractions, we identify any two contracted nodes $x_1$ and $x_2$ if their corresponding components in $\opt(G)$ are connected in $G\setminus Y_i^a$.
This implies there exists a simple cycle $C$ in ${\cal C}_i$ or ${\cal C}_{i+1}$ such that all neighbors of $x_1$ and $x_2$ are in $C$.
So by Lemma~\ref{lem: identification} the resulting graph after each identification is also triconnected.
Finally we delete parallel edges and self-loops if possible.
After identifying all possible nodes, the resulting graph has the same vertex set as $H_i^a$ and is triconnected.
Since the resulting graph is a subgraph of $\opt(G) \cap H_i^a$, we know $\opt(G) \cap H_i^a$ is a feasible solution for minimum-weighted 3-VCSS problem on $H_i^a$.

Note that for any slice $H_i^a$, we have $(H_i^a \cap R) \subseteq (H_i \cap R) \subseteq (D_{f(i)} \cup D_{f(i+1)})$.
By the optimality of $\opt_w(H_i^a)$, we have $w(\opt_w(H_i^a)) \le w(\opt(G) \cap H_i^a)$.
Since all the nonzero-weighted edges are in $H_i^a \setminus R$, Observation~\ref{obs: weight} still holds.
Then we have
\begin{align}
|\opt_w(H_i^a) \cap (H_i^a \setminus R)| \le |(\opt(G) \cap H_i^a) \cap (H_i^a \setminus R) | = |\opt(G) \cap (H_i^a \setminus R)|.
\label{equ: 1}
\end{align}
Since for distinct (edge-disjoint) simple cycles $C_a$ and $C_b$ in ${\cal C}_i$, subgraphs $H_i^a \setminus R$ and $H_i^b \setminus R$ are vertex-disjoint,
we have the following equalities. 
\begin{align}
H_i \setminus R = \bigcup\nolimits_{ C_a \in {\cal C}_i} (H_i^a \setminus R) 
\label{equ: 2}
\end{align}
\begin{align}
S_i \cap (H_i \setminus R) = \bigcup\nolimits_{ C_a \in {\cal C}_i } (\opt_w(H_i^a) \cap (H_i^a \setminus R))
\label{equ: 3}
\end{align} 
Then 
$$\begin{array}{lll}
 |S_i \cap (H_i \setminus R)| & = \left| \bigcup_{ C_a \in {\cal C}_i } (\opt_w(H_i^a) \cap (H_i^a \setminus R)) \right|  & \mbox{by~(\ref{equ: 3})} \\
 & \le \sum_{ C_a \in {\cal C}_i } |\opt_w(H_i^a) \cap (H_i^a \setminus R) | &  \\
 & \le \sum_{ C_a \in {\cal C}_i } |\opt(G) \cap (H_i^a \setminus R) |  & \mbox{by~(\ref{equ: 1})}\\
 & \le |\opt(G) \cap (H_i \setminus R)|. & \mbox{by~(\ref{equ: 2})}
\end{array}
$$
So we have $\left| S_i \right| = |S_i \cap (H_i \setminus R)| + |S_i \cap (H_i \cap R) | \le |\opt(G) \cap (H_i \setminus R)| + |D_{f(i)}| + |D_{f(i+1)}|$.
\end{proof}
\fi

\begin{lemma}\label{lem: VC2}
The union $\left( \bigcup_{i\ge 0, C_a \in {\cal C}_i} Sol(H_i^a) \right) \cup  R$ is a feasible solution for $G$.
\end{lemma}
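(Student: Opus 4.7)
The plan is to follow the inductive framework of the proof of Lemma~\ref{lem: EC2} at the outline level, replacing three-edge-connectivity arguments by three-vertex-connectivity arguments throughout. Build the same tree $T$ of slices (now 3VC slices): each slice is a node, $H_i^a$ and $H_j^b$ are adjacent iff they share an edge and $|i-j|=1$, and $T$ is rooted at the slice containing $\partial(G)$. For each slice $H_i^a$, let $M_i^a$ denote the graph obtained from $H_i^a$ by uncontracting all its inner contracted nodes, and let $T(H_i^a)$ be the subtree of $T$ rooted at $H_i^a$. I will prove by induction from leaves to root that
\begin{equation*}
S^a = \left(\bigcup_{H \in T(H_i^a)} Sol(H)\right) \cup \left(M_i^a \cap \bigcup_{j \ge i} D_{f(j+1)}\right)
\end{equation*}
is triconnected on $V(M_i^a)$. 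The case $i = 0$ yields the lemma since $M_0^a \cap \bigcup_{j\ge 0} D_{f(j+1)} \subseteq R$.

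The base case, for leaf slices (which have no inner contracted nodes), is immediate from $M_i^a = H_i^a$ and the triconnectivity of $Sol(H_i^a)$ guaranteed by Lemma~\ref{lem: VC1}. For the inductive step, construct $S^a$ by the same procedure as in Lemma~\ref{lem: EC2}: start from $Sol(H_i^a) \cup (D_{f(i+1)} \cap H_i^a)$, then for each inner contracted node $x$ of $H_i^a$ with vertex set $X$ and corresponding child $H_{i+1}^b$ satisfying $X \subseteq V(M_{i+1}^b)$, replace $x$ by $(S^b \cup D_{f(i+1)}) \cap G[X]$, and finally add any missing edges of $D_{f(i+1)}$. The analog of Claim~\ref{clm: EC1} that $(S^b \cup D_{f(i+1)}) \cap G[X]$ is connected carries over verbatim, since its proof only uses edge connectivity of $S^b$, which is implied by the inductive triconnectivity.

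The main work is verifying that triconnectivity is preserved through each replacement. Fix the replacement of $x$ by $X$ and let $C \in {\cal C}_{i+1}$ be the simple cycle enclosing $X$; then $C \subseteq H_i^a \cap D_{f(i+1)}$ lies entirely in the current subgraph $S^*$. For any $u, v \ne x$ in the pre-replacement graph, triconnectivity together with planarity produces three vertex-disjoint $u$-to-$C$ and $v$-to-$C$ paths in $S^*$ meeting $C$ at three distinct vertices each (since $x$ sits inside $C$, any $u$-to-$x$ path must cross $V(C)$). Delete any two vertices $a, b \ne u, v$: at least one $u$-to-$C$ and one $v$-to-$C$ path survives, ending at $w_u, w_v \in V(C) \setminus \{a, b\}$. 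If $w_u$ and $w_v$ lie on a common arc of $C \setminus \{a, b\}$ they are joined along $C$; otherwise $\{a, b\} \subseteq V(C)$ and we reroute through $X$, using the connectivity of $(S^b \cup D_{f(i+1)}) \cap G[X]$ -- preserved because $X$ is vertex-disjoint from $V(C)$ so the deletion does not touch $X$ -- together with surviving edges from $V(C) \setminus \{a, b\}$ into $X$ contributed by the original three vertex-disjoint paths. The three location cases for $u, v$ (both in $Y_i^a$, both in some deeper $M_{i+1}^b$, or one in each) are reduced to the above by invoking the inductive hypothesis on the relevant $S^b$ to push $u$ or $v$ to vertices of $C$, exactly as in the three cases of Lemma~\ref{lem: EC2}. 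The main obstacle is precisely the sub-case where both deletions land on $C$: two-vertex deletion splits a cycle into two arcs (whereas two-edge deletion leaves a cycle path-connected at worst), so the detour through $X$ and the vertex-disjointness $X \cap V(C) = \emptyset$ are essential to make the argument go through.
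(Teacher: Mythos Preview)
Your proposal is correct and follows the paper's inductive framework (tree of slices, the graphs $M_i^a$, the same inductive claim and base case, and the analog of Claim~\ref{clm: EC1}). The one substantive difference is your treatment of Case~1 ($u,v\in Y_i^a$): you import the replacement-by-replacement argument from Lemma~\ref{lem: EC2}, using the enclosing cycle $C\in{\cal C}_{i+1}$ and a reroute-through-$X$ detour when both deleted vertices land on $C$. The paper instead exploits \emph{vertex}-disjointness directly: the three internally vertex-disjoint $u$--$v$ paths in $Sol(H_i^a)$ can each pass through a given inner contracted node at most once, and distinct paths use distinct contracted nodes; hence one simply replaces each occurrence of a contracted node $x$ by a path inside its component $X$ (connected by the analog of Claim~\ref{clm: EC1}) to get three internally vertex-disjoint $u$--$v$ paths in $S^a$. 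This bypasses the cycle/reroute analysis entirely---that machinery was needed in the 3-ECSS proof precisely because edge-disjoint paths may all pass through the same contracted node, a difficulty that vanishes in the vertex-disjoint setting.

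Your Case~1 argument can be made rigorous (the key point being that a full surviving $u$-to-$x$ path becomes a $u$-to-$X$ path after uncontraction, and if $a,b\in V(C)$ then $X$ is untouched so connectivity through $X$ survives), but the paper's route is shorter and avoids the bookkeeping around $w_u,w_v$ and arcs of $C$. Also note that the paper uses four location cases rather than three, splitting the ``deep'' case according to whether $u$ and $v$ lie under the same child or different children; since triconnectivity is not transitive the way three-edge-connectivity is, the reductions in Cases~2--4 are argued via ``delete two vertices and exhibit a surviving path through some $w\in C_b$'' rather than by the transitivity appeal used in Lemma~\ref{lem: EC2}, so ``exactly as in the three cases of Lemma~\ref{lem: EC2}'' needs a small adjustment.
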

\ifFull
\begin{proof}
For any $i\ge 0$ and any simple cycle $C_a \in {\cal C}_i$, let $M_i^a$ be the graph obtained from slice $H_i^a$ by uncontracting all the inner contracted nodes of $H_i^a$.
By Lemma~\ref{lem: outer}, there is at most one outer contracted node $r_i^a$ for any slice $H_i^a$.

Define a tree $T$ based on all the slices: each slice is a node of $T$, and two nodes $H_i^a$ and $H_{j}^b$ are adjacent if they share any edge and $|i-j| =1$.
Root $T$ at the slice $H_0^a$, which contains the boundary of $G$. 
Let $T(H_{i}^a)$ be the subtree of $T$ that roots at slice $H_{i}^a$.
For each child $H_{i+1}^b$ of $H_i^a$, let $C_b$ be the simple cycle in ${\cal C}_{i+1}$ that is shared by $H_i^a$ and $H_{i+1}^b$.
Then $C_b$ is the boundary of $H_{i+1}^b \setminus \{ r_{i+1}^b\}$.

We prove the lemma by induction on this tree from leaves to root.
Assume for each child $H_{i+1}^b$ of $H_i^a$, there is a feasible solution $S^b$ for the graph $M_{i+1}^b$ such that
$S^b = \left( \bigcup_{H \in T(H_{i+1}^b)} Sol(H) \right) \cup  \left( M_{i+1}^b \cap \left(\bigcup_{j \ge i+1} D_{f(j+1)} \right) \right)$.  
We prove that there is a feasible solution $S^a$ for $M_i^a$ such that $S^a = \left( \bigcup_{H \in T(H_{i}^a)} Sol(H) \right) \cup \left( M_i^a \cap \left( \bigcup_{j \ge i} D_{f(j+1)} \right) \right)$.
For the root $H_0^a$ of $T$, we have $M_0^a \cap \left(\bigcup_{j\ge 0} D_{f(j+1)} \right) \subseteq R$, and then the lemma follows from the case $i = 0$.

The base case is that $H_i^a$ is a leaf of $T$.
When $H_i^a$ is a leaf, there is no inner contracted node in $H_i^a$ and we have $M_i^a = H_i^a$. 
So $Sol(H_i^a)$ is a feasible solution for $M_i^a$.

We first need a claim the same as Claim~\ref{clm: EC1}.
Note that any inner contracted node of $H_i^a$ is enclosed by some cycle $C_b$.
Let $x$ be any inner contracted node of $H_i^a$ that is enclosed by $C_b$, and $X$ be the vertex set of the connected component of $G$ corresponding to $x$.
Then we have the following claim, whose proof is the same as that of Claim~\ref{clm: EC1}.
\begin{claim}\label{clm: VC1}
If $X \subseteq M_{i+1}^b$ for some $H_{i+1}^b$, then $(S^b \cup D_{f(i+1)}) \cap G[X]$ is connected.
\end{claim}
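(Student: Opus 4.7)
The plan is to follow the proof of Claim~\ref{clm: EC1} essentially verbatim, because that argument uses no property of $S^b$ beyond its being a connected spanning subgraph of $M_{i+1}^b$, and triconnectivity certainly implies connectedness. First I would pin down the level structure of $X$: by the construction of the 3VC slice $H_i^a$, every inner contracted node is obtained from a connected component of $G \setminus U$ enclosed by $C_a$, where $U$ is the set of vertices of levels in $\{f(i), f(i)+1, \dots, f(i+1)\}$ enclosed by $C_a$. Hence every vertex of $X$ has level at least $f(i+1)+1$. Consequently the outer boundary $\partial(G[X])$ is composed entirely of vertices at level exactly $f(i+1)+1$ joined by edges of $E_{f(i+1)+1} \subseteq D_{f(i+1)}$, so $\partial(G[X]) \subseteq (S^b \cup D_{f(i+1)}) \cap G[X]$, and each boundary circuit of $G[X]$ is already connected inside the candidate subgraph.

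Next I would pick an arbitrary vertex $u \in X$ and any vertex $v \in V(M_{i+1}^b)$ at level $f(i+1)$ (such a $v$ exists because the bounding cycle $C_b \subseteq M_{i+1}^b$ lies at level $f(i+1)$); then $v \notin X$ because vertices of $X$ have level at least $f(i+1)+1$. By the inductive hypothesis carried through the proof of Lemma~\ref{lem: VC2}, $S^b$ is a triconnected, and hence connected, spanning subgraph of $M_{i+1}^b$, so there is a $u$-to-$v$ path $P$ in $S^b$. Since $u$ lies inside the topological disk bounded by an outer boundary circuit of $G[X]$ while $v$ lies outside that disk, planarity forces $P$ to cross $\partial(G[X])$ at some vertex $w$; the prefix of $P$ from $u$ to $w$ is contained in $S^b \cap G[X] \subseteq (S^b \cup D_{f(i+1)}) \cap G[X]$, so $u$ is connected to $w$ inside the candidate subgraph. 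Combined with the already-established connectedness of the boundary circuit, this shows $u$ is connected to every boundary vertex, and since $u \in X$ was arbitrary, the entire subgraph $(S^b \cup D_{f(i+1)}) \cap G[X]$ is connected.

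The main obstacle here is conceptually very small: the proof is essentially a transcription of Claim~\ref{clm: EC1}, the only substitution being that the inductive hypothesis now delivers triconnectedness (rather than three-edge-connectedness) of $S^b$, which is still enough because we need only connectedness to produce the $u$-to-$v$ path. The one detail I would double-check is the handling of the outer boundary when $G[X]$ happens to have several boundary circuits (which can occur if $X$, while forming one component of $G$, is not internally 2-connected in its embedding); but then every such outer circuit is still made of edges in $E_{f(i+1)+1}$, and applying the $u$-to-$v$ planarity crossing argument to whichever circuit topologically separates $u$ from $v$ still gives the required connection.
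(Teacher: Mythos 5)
Your proposal matches the paper's argument: the paper proves Claim~\ref{clm: VC1} by noting its proof is identical to that of Claim~\ref{clm: EC1}, i.e., the boundary of $G[X]$ lies in $E_{f(i+1)+1}\subseteq D_{f(i+1)}$ and a $u$-to-$v$ path in the (connected, since triconnected) solution $S^b$ must hit $\partial(G[X])$ by planarity. Your only addition, substituting triconnectivity for three-edge-connectivity in the inductive hypothesis and observing that mere connectedness suffices, is exactly the intended adaptation, so the proposal is correct and essentially the same as the paper's proof.
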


Now we ready to prove $S^a$ is a feasible solution for $M_i^a$.
That is, we prove it is triconnected.
Let $u$ and $v$ be any two vertices of $M_i^a$.
Let $Y_i^a = V(H_i^a) \setminus \{ \mbox{inner contracted nodes of } H_i^a \}$.
Since $V(M_i^a) = Y_i^a \cup \left( \bigcup_{H_{i+1}^b \mbox{ is a child of } H_i^a} V(M_{i+1}^b \setminus \{ r_{i+1}^b \})\right)$,
we have four cases.
\begin{description}
\item[Case 1: $\mathbf{u,v\in Y_i^a}$.] 
For any contracted component $X$ in $G$ that corresponds to an inner contracted node of $H_i^a$, by Claim~\ref{clm: VC1} all vertices in $X$ are connected in $(S^b \cup D_{f(i+1)}) \cap G[X]$ if $X \subseteq M_{i+1}^b$.
Then all vertices of $X$ are connected in $S^a \cap G[X]$, since for any child $H_{i+1}^b$ of $H_i^a$ we have $(S^b \cup D_{f(i+1)}) \subseteq S^a$.  
By the triconnectivity of $Sol(H_i^a)$, there are three vertex-disjoint paths between $u$ and $v$ in $Sol(H_i^a)$.
Since each inner contracted node of $H_i^a$ could be in only one path witnessing connectivity, the three vertex-disjoint $u$-to-$v$ paths in $Sol(H_i^a)$ could be transferred into another three vertex-disjoint $u$-to-$v$ paths in $S^a$ by replacing each contracted inner contracted node $x$ with a path in the corresponding component $X$. So $u$ and $v$ are triconnected in $S^a$.

\item[Case 2: $\mathbf{u,v \in M_{i+1}^b \setminus \{ r_{i+1}^b \}}$.] 
Since $V(C_b)$ is a cut for vertices enclosed by $C_b$ and those not enclosed by $C_b$, by the triconnectivity of $G$ we have $|V(C_b)| \ge 3$.
By inductive hypothesis, $S^b$ is a feasible solution for $M_{i+1}^b$, so there are three vertex-disjoint $u$-to-$r_{i+1}^b$ paths in $S^b$.
All these three paths must intersect $C_b$ by planarity, so there are three vertex-disjoint $u$-to-$C_b$ paths in $(S^b \setminus \{ r_{i+1}^b \}) \subseteq S^a$.
Similarly, there are three vertex-disjoint $v$-to-$C_b$ paths in $(S^b \setminus \{ r_{i+1}^b \} )\subseteq S^a$.
If we delete any two vertices in $S^a$, then there exist at least one $u$-to-$w_1$ path and one $v$-to-$w_2$ path for some vertices $w_1, w_2 \in C_b$.
Since all vertices in $C_b$ have level $f(i+1)$, they are in $Y_i^a$. Then by Case 1, vertices $w_1$ and $w_2$ are triconnected in $S^a$, so they are connected after deleting any two vertices.
Therefore, $u$ and $v$ are also connected after the deletion.

\item[Case 3: $\mathbf{u\in Y_i^a}$ and $\mathbf{v\in M_{i+1}^b \setminus \{ r_{i+1}^b\}}$.]
If one of $u$ and $v$ is in $C_b$, they are triconnected by Case 1 or 2.
So w.l.o.g. we assume $u$ is not enclosed by $C_b$ and $v$ is strictly enclosed by $C_b$.
Since $G$ is triconnected, we have $|V(C_b)| \ge 3$.
We could delete any two vertices in $S^a$ and there exists at least one vertex $w$ in $C_b$.
By Case 1, vertices $u$ and $w$ are connected after the deletion, and by Case 2, vertices $v$ and $w$ are connected after the deletion.
So $u$ and $v$ are connected after the deletion.

\item[Case 4: $\mathbf{u\in M_{i+1}^{b_1} \setminus \{ r_{i+1}^{b_1}\}}$ and $\mathbf{v\in M_{i+1}^{b_2} \setminus \{ r_{i+1}^{b_2}\}}$.]
W.l.o.g. assume $u$ is strictly enclosed by $C_{b_1}$ and $v$ is strictly enclosed by $C_{b_2}$,
otherwise, by Case 3 they are triconnected.
Since $G$ is triconnected, we have $|V(C_{b_1})| \ge 3$.
After deleting any two vertices in $S^a$, there exists a vertex $w \in C_{b_1}$.
By Case 2, vertices $u$ and $w$ are connected after deletion, and by Case 3 vertices $v$ and $w$ are connected after deletion.
So $u$ and $v$ are connected after deletion.
\end{description}
This completes the proof of Lemma~\ref{lem: VC2}.
\end{proof}
\fi

\begin{proof}[Proof of Theorem~\ref{thm: VC}]
We first prove the correctness, and then prove the running time.
Let the union $S = \left( \bigcup_{i\ge 0, C_a \in {\cal C}_i } \opt_w(H_i^a) \right) \cup  R$ be our solution.
By Lemma~\ref{lem: VC2}, the solution $S$ is feasible for $G$.
Then we have
$$\begin{array}{llll}
|S|
 & = \left| \left( \bigcup_{i\ge 0, C_a \in {\cal C}_i } \opt_w(H_i^a) \right) \cup  R \right| & \\ 
 & = \left| \left( \bigcup_{i\ge 0, C_a \in {\cal C}_i } \opt_w(H_i^a) \right) \right| + |R| & \\ 
 & \le \sum_{i \ge 0} \left| \bigcup_{ C_a \in {\cal C}_i } \opt_w(H_i^a)\right | +  | R | & \\ 
 & \le \sum_{i \ge 0} \left( \left| \opt(G) \cap (H_i \setminus R) \right| + |D_{f(i)}| + |D_{f(i+1)}| \right) + |R| &\mbox{by Lemma~\ref{lem: VC3}}\\
 & \le \sum_{i \ge 0} |\opt(G) \cap (H_i \setminus R) | + |R| + |R| + |R|  & \\ 
 & \le |\opt(G)| + 3|R| &\mbox{by~(\ref{equ: empty})}\\
 & \le (1+ 12/k) |\opt(G)|. &\mbox{by~(\ref{equ: VC1})}
\end{array}
$$
We set $k = 12/\epsilon$ and then we have $|S| \le (1+\epsilon) |\opt(G)|$.

Let $n = |V(G)|$ be the number of vertices in graph $G$.
We could find the edge set $R$ in linear time. 
By Lemma~\ref{lem: 3VC_slice} we could construct all slices in $O(n)$ time.
So the slicing step runs in linear time.
By Lemma~\ref{lem: bw}, the branchwidth of each slice is $O(k) = O(1/\epsilon)$.
Therefore, we could solve the minimum-weight 3-VCSS problem on each slice in linear time by Theorem~\ref{thm: dp}.
Based on the optimal solutions for all the slices, we could construct our final solution $S$ in linear time.
So our algorithm runs in linear time.
\end{proof}

\ifFull
\section{Dynamic Programming for Minimum-Weight $3$-ECSS on graphs with bounded branchwidth}\label{sec: dp}

In this section, we give a dynamic program to compute the optimal solution of minimum-weighted $3$-ECSS problem on a graph $G$ with bounded branchwidth $w$. 
This will prove Theorem~\ref{thm: dp} for the minimum-weight 3-ECSS problem.
Our algorithm is inspired by the work of Czumaj and Lingas~\cite{CL98, CL99}.
Note that $G$ need not be planar. 

Given a branch decomposition of $G$, we root its decomposition tree $T$ at an arbitrary leaf.  
For any edge $\alpha$ in $T$, let $L_{\alpha}$ be the separator corresponding to it, and $E_{\alpha}$ be the subset of $E(G)$ mapped to the leaves in the subtree of $T \setminus \{\alpha\}$ that does not include the root of $T$.
Let $H$ be a spanning subgraph of $G[E_{\alpha}]$. 
We adapt some definitions of Czumaj and Lingas~\cite{CL98, CL99}.
An {\em separator completion} of $\alpha$ is a multiset of edges between vertices of $L_{\alpha}$, each of which may appear up to $3$ times.  
\begin{definition}
A {\em configuration} of a vertex $v$ of $H$ for an edge $\alpha$ of $T$ is a pair $(A, B)$, where 
$A$ is a tuple $(a_1, a_2, \dots, a_{|L_{\alpha}|})$, representing that there are $a_i$ edge-disjoint paths from $v$ to the $i$th vertex of $L_{\alpha}$ in $H$, and $B$ is a set of tuples $(x_i, y_i, b_i)$, representing that there are $b_i$ edge-disjoint paths between the vertices $x_i$ and $y_i$ of $L_{\alpha}$ in $H$. (We only need those configurations where $|a_i| \le 3$ for all $0\le i \le |L_{\alpha}|$ and $|b_i| \le 3$ for all $i \ge 0$.) All the $\sum_{i=1}^{|L_{\alpha}|} a_i+\sum_{i}b_i$ paths in a configuration should be mutually edge-disjoint in $H$. 
\end{definition}
\begin{definition}
For any pair of vertices $u$ and $v$ in $H$, let $Com_H(u,v)$ be the set of separator completions of $\alpha$ each of which augments $H$ to a graph where $u$ and $v$ are three-edge connected. 
For each vertex $v$ in $H$, let $Path_H(v)$ be a set of configurations of $v$ for $\alpha$.
Let $Path_H$ be the set of all the non-empty $B$ in which all tuples can be satisfied in $H$.
Let $C_H$ be the set consisting of one value in each $Com_H(u,v)$ for all pairs of vertices $u$ and $v$ in $H$, and $P_H$ be the set consisting of one value in each $Path_H(v)$ for all vertices $v$ in $H$. 
We call the tuple $(C_H, P_H, Path_H)$ the {\em connectivity characteristic} of $H$, and denote it by $Char(H)$.
\end{definition}

Note that $|L_{\alpha}| \le w$ for any edge $\alpha$.
Subgraph $H$ may correspond to multiple $C_H$ and $P_H$, so $H$ may have multiple connectivity characteristics. Further, each value in $P_H$ represents at least one vertex.  
For any edge ${\alpha}$, there are at most $4^{O(w^2)}$ distinct separator completions ($O(w^2)$ pairs of vertices, each of which can be connected by at most $3$ parallel edges) and at most $2^{4^{O(w^2)}}$ distinct sets $C_H$ of separator completions. 
For any edge ${\alpha}$, there are at most $4^{O(w^2)}$ different configurations for any vertex in $H$ since the number of different sets $A$ is at most $4^w$, the number of different sets $B$ is at most $4^{O(w^2)}$ (the same as the number of separator completions). 
So there are at most $2^{4^{O(w^2)}}$ different sets of configurations $P_H$, and at most $2^{4^{O(w^2)}}$ different sets $B$. Therefore, there are at most $2^{4^{O(w^2)}}$ distinct connectivity characteristics for any edge $\alpha$. 

\begin{definition}
A configuration $(A, B)$ of vertex $v$ for $\alpha$ is {\em connecting} if the inequality $\sum_{i=1}^{|L|} a_i \geq 3$ holds where $a_i$ is the $i$th coordinate in $A$. That is, there are enough edge-disjoint paths from $v$ to the corresponding separator $L_{\alpha}$ which can connect $v$ and vertices outside $L_{\alpha}$. $Char(H)$ is connecting if all configurations in its $P_H$ set are connecting. Subgraph $H$ is connecting if at least one of $Char(H)$ is connecting. In the following, we only consider connecting subgraphs and their connecting connectivity characteristics.
\end{definition}

In the following, we need as a subroutine an algorithm to solve the following problem: when given a set of demands $(x_i, y_i, b_i)$ and a multigraph, we want to decide if there exist $b_i$ edge-disjoint paths between vertices $x_i$ and $y_i$ in the graph and all the $\sum_i b_i$ paths are mutually edge-disjoint. 
Although we do not have a polynomial time algorithm for this problem, we only need to solve this on graphs with $O(w)$ vertices, $O(w^2)$ edges and $O(w^2)$ demands.
So even an exponential time algorithm is acceptable for our purpose here. Let $ALG$ be an algorithm for this problem, whose running time is bounded by a function $f(w)$, which may be exponential in $w$.

For an edge $\alpha$ in the decomposition tree $T$, let $\beta$ and $\gamma$ be its two child edges. 
Let $H_{1}$ ($H_{2}$) be a spanning subgraph of $G[E_{\beta}]$ ($G[E_{\gamma}]$).  Let $H = H_1 \cup H_2$. Then we have the following lemma.
\begin{lemma}\label{lem: dp}
For any pair of $Char(H_1)$ and $Char(H_2)$, all the possible $Char(H)$, that could be obtained from $Char(H_1)$ and $Char(H_2)$, can be computed in 
$O(4^{w^2}f(w)+4^{w^24^{w^2}})$
time.
\end{lemma}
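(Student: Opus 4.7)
\emph{Proof plan.} The plan is to reduce the computation of all possible $Char(H)$ to polynomially many invocations of the oracle $ALG$ on small auxiliary multigraphs that compactly encode what $Char(H_1)$ and $Char(H_2)$ already say about edge-disjoint paths through the separators. Two properties of a branch decomposition are crucial here: $L_\alpha \subseteq L_\beta \cup L_\gamma$, and every path in $H = H_1 \cup H_2$ that crosses from $E_\beta$ into $E_\gamma$ must do so through a vertex of $L_\beta \cap L_\gamma$. Together these imply that the connectivity behavior of $H$ relative to $L_\alpha$ is completely determined by the $A$-vectors and $B$-tuples recorded on each side, provided edge-disjointness across the boundary is modeled faithfully.

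Concretely, for each pair consisting of a $B_1 \in Path_{H_1}$ and a $B_2 \in Path_{H_2}$, I would build an auxiliary multigraph $M(B_1,B_2)$ on vertex set $L_\beta \cup L_\gamma$ by installing, for every tuple $(x,y,b)$ appearing in $B_1$ or $B_2$, exactly $b$ parallel virtual edges between $x$ and $y$. Then: to compute the candidate sets $Path_H$, I enumerate each candidate $B$ of demands on pairs of $L_\alpha$ vertices (at most $4^{O(w^2)}$ choices) and call $ALG$ on $M(B_1,B_2)$ to decide satisfiability; to compute $C_H$, I enumerate the $4^{O(w^2)}$ separator completions of $\alpha$, augment $M(B_1,B_2)$ by the completion, and call $ALG$ to determine which pairs $u,v$ become three-edge connected, thereby reading off the $Com_H(u,v)$ entries; and to compute $P_H$, for each vertex $v$ that appears in $P_{H_1}$ or $P_{H_2}$, I attach a fresh copy of $v$ to $L_\beta \cup L_\gamma$ with edges dictated by its $A$-component, then call $ALG$ to determine the realizable $A$-vector of $v$ into $L_\alpha$. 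The $O(4^{w^2}f(w))$ term of the bound accounts for the separator-completion enumeration times the cost of one $ALG$ call, and the $4^{w^2 4^{w^2}}$ term absorbs the enumeration over candidate outputs (sets of configurations and sets of completions) that might end up in the characteristic.

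The main obstacle is verifying that the auxiliary multigraph $M(B_1,B_2)$ really captures \emph{simultaneous} edge-disjoint realizability: if several combined paths meet at the same vertex of $L_\beta \cap L_\gamma$, it must be impossible for two of them to silently reuse the same underlying edge of $H_1$ or $H_2$. The needed composition lemma is that, because each parallel virtual edge is interpreted as claiming one distinct unit of an already-realized edge-disjoint path system on its side (and there are at most three such units between any pair of separator vertices), any routing produced by $ALG$ in $M(B_1,B_2)$ lifts to genuinely edge-disjoint paths in $H$, and conversely any edge-disjoint routing in $H$ projects to a feasible routing in $M(B_1,B_2)$. This two-sided correspondence is essentially what the definition of a configuration is engineered to record, and once it is stated precisely the correctness of the enumeration and the claimed running time follow immediately.
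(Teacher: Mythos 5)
Your overall strategy---enumerate candidate outputs, encode the recorded edge-disjoint path information of the two children as parallel edges in a small auxiliary multigraph on $L_\beta \cup L_\gamma$, and decide realizability by calling $ALG$ (or a flow routine)---is the same as the paper's, and your treatment of $Path_H$ matches the paper's step almost exactly. But your computation of $C_H$ has a genuine gap: you never use $C_{H_1}$ or $C_{H_2}$ at all, and the pairs $u,v$ whose sets $Com_H(u,v)$ you must produce are arbitrary vertices of $H$, most of which are \emph{not} in $L_\beta \cup L_\gamma$ and hence do not even appear in your auxiliary graph $M(B_1,B_2)$, so ``call $ALG$ to determine which pairs $u,v$ become three-edge connected'' is not an operation you can perform on that graph. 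For a pair $u,v$ lying inside the same child $H_i$, the only information the child's characteristic carries about them is $Com_{H_i}(u,v)$, the separator completions of the child edge that would make them three-edge connected; the combination step must therefore treat such a completion $C \in C_{H_i}$ as a \emph{demand set} and ask, via $ALG$, whether a candidate completion $C'$ of $\alpha$ together with some $B \in Path_{H_{3-i}}$ can realize those demands. This is exactly the first part of the paper's $C_H$ computation, and it cannot be recovered from $M(B_1,B_2)$ augmented by $C'$ alone. For a cross pair $u \in H_1$, $v \in H_2$, the auxiliary graph must additionally contain the two vertices themselves together with the edges dictated by their full configurations (both the $A$-vectors and the $B$-sets), after which three-edge-connectivity of $u$ and $v$ in the graph plus each candidate $C'$ is tested by max-flow; your construction omits $u$ and $v$ from the graph in this step.

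A smaller instance of the same omission appears in your $P_H$ step: you attach $v$ using only its $A$-component, but the $B$-component of its configuration (together with a set $B'$ from the other side) is needed to certify that all the claimed paths can coexist edge-disjointly, and the candidate output must itself be a full pair $(A^*,B^*)$ rather than just a realizable $A$-vector into $L_\alpha$. With these pieces restored your argument coincides with the paper's proof and the stated time bound goes through; without them the computed $C_H$ and $P_H$ are not determined by the data you feed to $ALG$, and the induction underlying the dynamic program breaks.
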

\begin{proof}
We compute all the possible sets for the three components of $Char(H)$.

\noindent{\bf Compute all possible $\mathbf{C_{H}}$} 
Each $C_{H}$ contains two parts: the first part covers all pairs of vertices in the same $H_i$ for $i=1,2$ and the second part covers all pairs of vertices from distinct subgraphs. 

For the first part, we generalize each value $C \in C_{H_i}$ for $i=1,2$ into a possible set $X_C$. Notice that each separator completion can be represented by a set of demands $(x, y, b)$ where $x$ and $y$ are in the separator. 
For a candidate separator completion $C'$ of ${\alpha}$, we combine $C'$ with each $B \in Path_{H_{3-i}}$ to construct a graph $H'$ and define the demand set the same as $C$. 
By running $ALG$ on this instance, we can check if $C'$ is a legal generalization for $C$. 
This could be computed in $4^{O(w^2)}w^2 + 4^{O(w^2)}f(w)$ time for each $C$.
All the legal generalizations for $C$ form $X_C$. 

Now we compute the second part. 
Let $(A_1, B_1) \in P_{H_1}$ and $(A_2, B_2) \in P_{H_2}$ be the configurations for some pair of vertices $u\in H_1$ and $v\in H_2$ respectively. We will compute possible $Com_H(u,v)$. 
We first construct a graph $H'$ on $L_{\beta} \cup L_{\gamma} \cup \{u, v\}$ by the two configurations: add $i$ parallel edges between two vertices if there are $i$ paths between them represented in the configurations. 
Then we check for each candidate separator completion $C'$ if $u$ and $v$ are three-edge connected in $H'\cup C'$.
We need $O(w^3)$ for this checking if we use Orlin's max-flow algorithm~\cite{Orlin13}.
All those $C'$ that are capable of providing three-edge-connectivity with $H'$ form $Com_H(u,v)$. This can be computed in $4^{O(w^2)}w^3$ time for each pair of configurations.

A possible $C_H$ consists of each value in $X_C$ for every $C \in C_{H_i}$ for $i =1,2$ and each value in $Com_H(u,v)$ for all pairs of configurations of $P_{H_1}$ and $P_{H_2}$. To compute all the sets, we need at most $4^{O(w^2)}w^3 + 4^{O(w^2)}f(w)$ time. There are at most $4^{O(w^2)}$ sets and each may contain at most $4^{O(w^2)}$ values. Therefore, to generate all the possible $C_H$ from those sets, we need at most $4^{w^24^{O(w^2)}}$ time.

\noindent{\bf Compute all possible $\mathbf{P_H}$}
We generalize each configuration $(A, B)$ of $v$ in $P_{H_i}$ ($i = 1,2$) into a set $Y_v$ of possible configurations. 
For each set $B'$ in $Path_{H_{3-i}}$, we construct a graph $H'$ by $A$, $B$ and $B'$ on vertex set $L_{\beta} \cup L_{\gamma} \cup \{v\}$: if there are $b$ disjoint paths between a pair of vertices represented in $A$, $B$ or $B'$, we add $b$ parallel edges between the same pair of vertices in $H'$, taking $O(w^2)$ time.
For a candidate value $(A^*, B^*)$ for ${\alpha}$, we define a set of demands according to $A^*$ and $B^*$ and run $ALG$ on all the possible $H'$ we construct for sets in $Path_{H_{3-i}}$. If there exists one such graph that satisfies all the demands, then we add this candidate value into $Y_v$. 
We can therefore compute each set $Y_v$ in $4^{O(w^2)}w^2 + 4^{O(w^2)}f(w)$ time.
A possible $P_H$ consists of each value in $Y_v$ for all $v\in V(H)$. There are at most $4^{O(w^2)}$ such sets and each may contain at most $4^{O(w^2)}$ values. So we can generate all possible $P_H$ from those sets in $4^{w^24^{O(w^2)}}$ time. 

\noindent{\bf Compute $\mathbf{Path_H}$}
For each pair of $B_1 \in Path_{H_1}$ and $B_2 \in Path_{H_2}$, we construct a graph $H'$ on vertex set $L_{\beta} \cup L_{\gamma}$: if two vertices are connected by $b$ disjoint paths, we add $b$ parallel edges between those vertices in $H'$. Since each candidate $B'$ for ${\alpha}$ can be represented by a set of demands, we only need to run $ALG$ on all possible $H'$ to check if $B'$ can be satisfied.
We add all satisfied candidates $B'$ into $Path_H$. This can be computed in $4^{O(w^2)}w^2 + 4^{O(w^2)}f(w)$ time.

Therefore, the total running time is $O(4^{w^2}f(w)+4^{w^24^{w^2}})$.
For each component we enumerate all possible cases, and the correctness follows.
\end{proof}

Our dynamic programming is guided by the decomposition tree $T$ from leaves to root. 
For each edge $\alpha$, our dynamic programming table is indexed by all the possible connectivity characteristics.
Each entry indexed by the connectivity characteristic $Char$ in the table is the weight of the minimum-weight spanning subgraph of $G[E_{\alpha}]$ that has $Char$ as its connectivity characteristic. 

\paragraph*{Base case} For each leaf edge $uv$ of $T$, the only subgraph $H$ is the edge $uv$ and the separator only contains the endpoints $u$ and $v$. 
$Com_H(u,v)$ contains the multisets of edge $uv$ that appears twice.
$Path_H(u)$ contains two configurations: $((3,0), \{(u, v, 1)\})$ and $((3,1), \emptyset )$, and $Path_H(v)$ contains two configurations: $((0, 3), \{(u, v, 1)\})$ and $((1,3), \emptyset)$. $Path_H$ contains one set: $\{(u, v, 1)\}$.

For each non-leaf edge $\alpha$ in $T$, we combine every pair of connectivity characteristics from its two child edges to fill in the dynamic programming table for $\alpha$. The root can be seen as a base case, and we can combine it with the computed results. 
The final result will be the entry indexed by $( \emptyset, \emptyset, \emptyset)$ in the table of the root.
Let $m = |E(G)|$. Then the size of the decomposition tree $T$ is $O(m)$. 
By Lemma~\ref{lem: dp}, we need $O(4^{w^2}f(w) + 4^{w^24^{w^2}})$ time to combine each pair of connectivity characteristics. 
Since there are at most $2^{4^{O(w^2)}}$ connectivity characteristics for each node, the total running time will be $O(2^{4^{w^2}}f(w)m + 4^{w^24^{w^2}}m)$. Since the branchwidth $w$ of $G$ is bounded, the running time will be $O(|E(G)|)$. 

\paragraph*{Correctness}
The separator completions guarantee the connectivity for the vertices in $H$, and the connecting configurations enumerate all the possible ways to connect vertices in $H$ and vertices of $V(G)\setminus V(H)$. So the connectivity requirement is satisfied. The correctness of the procedure follows from Lemma~\ref{lem: dp}. 

\fi

\subparagraph*{Acknowledgements}
We thank Glencora Borradaile and Hung Le for helpful discussions.
\newpage
\bibliography{zerr.bib}


\end{document}